\newcommand{\GF}[2][2]{{\mathbb F}_{#1^{#2}}}
\newcommand{\wt}{wt}
\newcommand{\V}[1]{{\mathbb F}^{#1}_2}
\begin{document}
\title{Improved upper bound on root number of linearized polynomials and its application to nonlinearity estimation of Boolean functions}
\author{Sihem Mesnager\inst{1}  \and Kwang Ho Kim\inst{2,3} \and Myong Song Jo\inst{4}} \institute{ LAGA, Department of Mathematics, University of Paris
VIII and Paris XIII,  CNRS and Telecom ParisTech, France
\email{smesnager@univ-paris8.fr} \and Institute of Mathematics,
State Academy of Sciences, Pyongyang, DPR Korea \and PGItech Corp.,
Pyongyang, DPR Korea \and  KumSong School, Pyongyang,
DPR Korea }

\maketitle

\begin{abstract}
To determine the dimension of null space of any given linearized
polynomial is one of vital problems in finite field theory, with
concern to design of modern symmetric cryptosystems. But, the known
general theory for this task is much far from giving the exact
dimension when applied to a specific linearized polynomial. The
first contribution of this paper is to give a better general method
to get more precise upper bound on the root number of any given
linearized polynomial. We anticipate this result would be applied as
a useful tool  in many research branches of finite field and
cryptography. Really we apply this result to get tighter estimations
of the lower bounds on the second order nonlinearities of general
cubic Boolean functions, which has been being an active research
problem during the past decade, with many examples showing great
improvements. Furthermore, this paper shows that by studying the
distribution of radicals of derivatives of a given Boolean functions
one can get a better lower bound of the second-order nonlinearity,
through an example of the monomial Boolean function $g_{\mu}=Tr(\mu
x^{2^{2r}+2^r+1})$ over any finite field $\GF{n}$.
\end{abstract}
\noindent\textbf{Keywords:} Boolean Functions $\cdot$ Nonlinearity
$\cdot$ Linearized Polynomial $\cdot$ Root Number

\section{Introduction}
To determine the dimension of null space of linearized polynomials
is one of vital problems in finite field theory, with concern to
design of modern symmetric cryptosystems. But, the known general
theory for this task is much far from giving the exact dimension
when applied to a specific linearized polynomial. The first
contribution of this paper is to give a better general method to get
more precise upper bound on the root number of any given linearized
polynomial.

 As the second contribution we apply this result to get
tighter estimations of the lower bounds on the second order
nonlinearities of cubic Boolean functions, which has been being an
active research problem during the past decade as summarized below.

The $r-$th order nonlinearity of $n-$variable Boolean function $f$
is the minimum Hamming distance between $f$ and all $n-$variable
Boolean functions of degree at most $r$. Computing the $r$-th order
nonlinearity of a given function with algebraic degree strictly
greater than $r$ is a hard task for $r> 1$. Even the second-order
nonlinearity is unknown for all functions except for a few peculiar
ones and for functions in small numbers of variables. The best known
upper bound on the $r-$th nonlinearity for $r>1$ credits to Carlet
and Mesnager \cite{25}. Proving lower bounds on the $r$-th order
nonlinearity of functions is also a quite difficult task, even for
the second order \cite{5}.

In 2006, Carlet \cite{carlet07} and Carlet et al. \cite{carlet03} have presented  two lower bounds involving the algebraic immunity on the $r$th-order nonlinearity. None of them improves upon the other one in all situations. In 2007, the first author \cite{26} presented an improved lower bound on the
$r-$th-order nonlinearity profile of Boolean functions, given their
algebraic immunity.  Her results improve significantly upon the lower bound in  \cite{carlet03}
  for all orders  and upon the bound in \cite{carlet07} for low orders (which play the most important role for attacks). Note that relation between nonlinearity and algebraic immunity have been studied further in \cite{Lobanov2006,WangJohansson2010}.

In 2008, Carlet \cite{5} introduced a method to
determine the lower bound of the $r$-th order nonlinearity of a
function from the maximum value or the lower bounds of the
$(r-1)$-th order nonlinearity of its first derivatives, and obtained
the lower bounds on the second order nonlinearities of some
functions including Welch function and multiplicative inverse
function and so on. Carlet \cite{21} also lower bounded the
nonlinearity profile of the Dillon bent functions. In \cite{20},
Kolokotronis and Limniotis get a tighter lower bound on the
second-order nonlinearity of the cubic Boolean functions within the
Maiorana-McFarland class. In 2009, Sun and Wu \cite{10} have found
lower bounds of the second-order nonlinearities of three classes of
cubic bent Boolean functions, and Gangopadhyay, Sarkar and Telang
\cite{6} improved lower bounds on the second order nonlinearities of
the cubic monomial Boolean functions $Tr(\lambda x^{2^{2r}+2^r+1})$
over $\GF{n}$ with $n=6r$. Gode and Gangopadhyay \cite{7} lower
bound the second-order nonlinearities of the cubic monomial Boolean
functions. In 2010, Li, Hu, Gao \cite{8} extend these results from
monomial Boolean functions to Boolean functions with more trace
terms, and get better lower bound than those of Gode and
Gangopadhyay \cite{7} for monomial functions. In 2011,
 Singh \cite{17} lower bounded the second-order nonlinearity of
$Tr(\lambda x^{2^{2r}+2^r+1})$ over $\GF{n}$ with $n=3r$. Sun and Wu
\cite{16} obtained a better lower bound of second-order nonlinearity
of $Tr(\lambda x^{2^{2r}+2^r+1})$ over $\GF{n}$ with $n=4r$.
Gangopadhyay and Garg \cite{11} obtain a better lower bound of
second nonlinearity of $Tr(\lambda x^{2^{2r}+2^r+1})$ over $\GF{n}$
with $n=5r$. Garg and Gangopadhyay \cite{22} obtained a better lower
bound of second-order nonlinearity for a bent function via Niho
power function. In 2018, Carlet \cite{Carlet2018} has obtained an upper bound on the 
nonlinearity of monotone Boolean functions in even dimension and showed a
deep weakness of such functions.

In this paper, new results which significantly improve all these previous
estimations on lower bound of the second-order nonlinearity of
general cubic Boolean functions are achieved by applying the
improved upper-bound estimation of root number of linearized
polynomials, together with a set of examples.

Furthermore, this paper shows that one can get a better lower bound
of the second-order nonlinearity by studying the distribution of
radicals of derivatives of a given Boolean functions, by an example
of the Boolean function $g_{\mu}=Tr(\mu x^{2^{2r}+2^r+1})$ over any
finite field $\GF{n}$.

The paper is structured as follows. Section \ref{preliminaries}  sets  main notations and gives background on Boolean functions. In Section \ref{known_result}, we present the known lower-bounds on the second-order nonlinearity of Boolean functions. In Section \ref{Improved upper bound of root number on linearized polynomials}, new upper bound on the root number of linearized polynomials is given (Theorem \ref{corrootnum}). We also focus on the related Problem \ref{prob} and presents an algorithmic approach to this problem. In Section \ref{Application}, we apply the results of the previous sections to derive a better estimation on the second order nonlinearity of cubic Boolean functions  (Theorem \ref{maintheorem}). By examining examples, we show in Section \ref{Example} that our estimation  is more precise than the one given by Li, Hu and Gao \cite{8}. In Section \ref{better}, a deep analysis toward a better lower bound on the nonlinearity of cubic functions is presented as well as several open problems for future considerations.

\section{Preliminaries}\label{preliminaries}

Let $L$ be a Galois extension of a field $K$ and $\mathrm{Gal}
(L/K)$ be the Galois group of $L$ over $K$.  Let
$\sigma^0(x)=x,\sigma^j(x)=\sigma(\sigma^{j-1}(x))$ for $\sigma \in
\mathrm{Gal} (L/K)$ and $x\in L$. Then for a
 given polynomial $w(t)= \sum_{j=0}^l c_j t^j \in L[t]$, a homomorphism $w(\sigma)$
 is defined to act as $w(\sigma)x =
\sum_{j=0}^{l}c_j\sigma^j(x)$ on the element $x\in L$. The following
lemma characterizes the size of kernel space of the homomorphism
$w(\sigma)$.

\begin{lemma} (\cite{12,13}). Let
$L$ be a cyclic Galois extension of $K$ of degree $n$ and suppose
that $\sigma$ generates the Galois group of $L$ over $K$. Let $m$ be
an integer satisfying $1\leq m \leq n$ and $w(t)$ be a polynomial of
degree $m$ in $L[t]$. Let $R = \{x\in L | w(\sigma)x =0\}$. Then we
have $\dim_K R \leq m$.
\end{lemma}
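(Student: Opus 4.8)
The plan is to prove the statement, in the slightly more convenient form $\dim_K R \le \deg w$ for every nonzero $w \in L[t]$, by induction on $m = \deg w$. First observe that since each $\sigma^j$ is $K$-linear and fixes $K$, the map $w(\sigma)\colon L \to L$ is $K$-linear, so $R = \ker w(\sigma)$ is genuinely a $K$-subspace and $\dim_K R$ is well defined. The base case $m = 0$ is immediate: there $w(\sigma)$ is multiplication by the nonzero constant $c_0$, hence injective, so $\dim_K R = 0$. For the inductive step the idea is to peel off one linear factor $\sigma - 1$ from $w(\sigma)$ and apply the hypothesis to an operator of degree $m-1$.

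If $R = \{0\}$ there is nothing to prove, so assume there is some $x_0 \in R \setminus \{0\}$. The first reduction is a multiplicative change of variable $x = x_0 y$: since $w(\sigma)(x_0 y) = \sum_{j=0}^m \bigl(c_j \sigma^j(x_0)\bigr)\sigma^j(y)$, the operator $\widetilde w(\sigma)y := w(\sigma)(x_0 y)$ is again of the required shape, of the same degree $m$ (its leading coefficient $c_m\sigma^m(x_0)$ is nonzero), and $y \mapsto x_0 y$ is a $K$-linear bijection carrying $\ker\widetilde w(\sigma)$ onto $R$. Moreover $1 \in \ker\widetilde w(\sigma)$ because $x_0 \in R$. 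Thus, after replacing $w$ by $\widetilde w$, I may assume $1 \in R$, i.e. $\sum_{j=0}^m c_j = 0$.

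Under this normalization I claim $w(\sigma)$ factors as $w(\sigma) = v(\sigma)\circ(\sigma - 1)$ with $\deg v = m-1$. Expanding $v(\sigma)\circ(\sigma-1)$ and matching the coefficient of each $\sigma^k$ forces the telescoping relations $d_{m-1}=c_m$, $d_{k-1}-d_k=c_k$, and $-d_0=c_0$; these are solved by $d_k = \sum_{j=k+1}^m c_j$, and the final relation is consistent precisely because $\sum_{j=0}^m c_j = 0$. The leading coefficient $d_{m-1}=c_m$ is nonzero, so $\deg v = m-1$. With the factorization in hand, $R = (\sigma-1)^{-1}\bigl(\ker v(\sigma)\bigr)$, and restricting $\sigma - 1$ to $R$ gives a $K$-linear map into $\ker v(\sigma)$ whose kernel is the fixed field of $\sigma$, namely $K$ itself, of dimension $1$. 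Rank--nullity then yields $\dim_K R \le 1 + \dim_K \ker v(\sigma) \le 1 + (m-1) = m$ by the induction hypothesis, completing the argument.

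The step I expect to be the main obstacle is the factorization $w(\sigma)=v(\sigma)\circ(\sigma-1)$. One is tempted to read it off from the commutative divisibility $(t-1)\mid w(t)$ in $L[t]$, but this is a false shortcut: because $\sigma$ does not commute with multiplication by the coefficients $c_j$, the naive substitution $t \mapsto \sigma$ does not turn a product of polynomials into a composition of operators. The honest route is to work in the twisted (Ore) setting and determine $v$ by the explicit coefficient matching above, where the hypothesis $\sum_{j} c_j = 0$ is exactly the compatibility condition that makes the remainder vanish. A secondary point requiring care is that the reduction to a constant solution is only available once $R \neq \{0\}$, which is why that case is split off at the start.
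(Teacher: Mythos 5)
Your proof is correct, and it is worth noting that the paper itself offers no proof of this lemma at all: it is stated with a bare citation to \cite{12,13} (Gow--Quinlan and Fu--Feng--Wu), so you have supplied a self-contained argument where the authors rely on the literature. Your route is the standard Euclidean-division induction in the twisted (Ore) polynomial ring $L[t;\sigma]$: normalize by the $K$-linear bijection $y\mapsto x_0y$ so that $1\in R$, peel off a right factor $\sigma-1$ via the telescoping coefficients $d_k=\sum_{j>k}c_j$, and apply rank--nullity using $\ker(\sigma-1)=K$ (which holds because $\sigma$ generates the full Galois group). Every step checks out, including the two points you flag yourself: the factorization really must be done by explicit coefficient matching rather than by commutative divisibility, and after the normalization $K=K\cdot 1\subseteq R$, so the kernel of $\sigma-1|_R$ is exactly $K$ and the count $\dim_K R\le 1+(m-1)$ is tight. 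For comparison, the cited sources can also be read as proving the bound in one step via a different key lemma, namely that for $K$-linearly independent $x_0,\dots,x_m\in L$ with $m+1\le n$ the matrix $\bigl(\sigma^j(x_i)\bigr)$ is nonsingular (an Artin--Dedekind/Moore-determinant statement), so that a nontrivial relation $\sum_j c_j\sigma^j(x_i)=0$ for all $i$ is impossible; that argument is shorter once the determinant lemma is available, whereas yours is elementary and only uses Galois descent for $\ker(\sigma-1)$. A minor bonus of your version is that it never uses the hypothesis $m\le n$, which is indeed superfluous since $\dim_K R\le\dim_K L=n$ anyway.
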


Let $K=\GF{}$ and $L=\GF{n}$. Because given $\gcd(n,s)=1$,
$\sigma(x)=x^{2^s}$ is a generator of the Galois group of $L$ over
$K$, as a corollary we can get following.

\begin{lemma}\cite{2}\label{lemrootnum}
Let $g(x)=\sum_{i=0}^{\nu}r_ix^{2^{si}}(r_i\in \GF{n})$ be a
linearized polynomial over $\GF{n}$ with $\gcd(n,s)=1$. Then,
equation $g(x)=0$ has at most $2^\nu$ solutions in  $\GF{n}$.
\end{lemma}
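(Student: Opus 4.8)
The plan is to realize $g$ as an operator of the form $w(\sigma)$ and then invoke the preceding lemma verbatim. Put $K=\GF{}$ and $L=\GF{n}$, so that $L/K$ is cyclic of degree $n$, and take $\sigma(x)=x^{2^s}$. As already observed just above the statement, the hypothesis $\gcd(n,s)=1$ guarantees that $\sigma$ generates $\mathrm{Gal}(L/K)$, which is exactly the setting required by the previous lemma.

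First I would record the identity $\sigma^i(x)=x^{2^{si}}$, valid for every $x\in\GF{n}$ and every $i\ge 0$; this follows at once by induction on $i$ from $\sigma^i(x)=\sigma(\sigma^{i-1}(x))=(x^{2^{s(i-1)}})^{2^s}$. Consequently, setting $w(t)=\sum_{i=0}^{\nu}r_i t^i\in\GF{n}[t]$, we obtain
\[
w(\sigma)x=\sum_{i=0}^{\nu}r_i\sigma^i(x)=\sum_{i=0}^{\nu}r_i x^{2^{si}}=g(x)
\]
for all $x\in\GF{n}$. Hence the solution set $R=\{x\in\GF{n}\mid g(x)=0\}$ coincides with the kernel $\{x\in L\mid w(\sigma)x=0\}$ appearing in the previous lemma, and $w$ has degree $\nu$ (we may assume $r_\nu\neq 0$, since a strictly smaller degree would only sharpen the bound).

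Next I would apply the previous lemma with $m=\nu$, which yields $\dim_K R\le \nu$. Because $K=\GF{}$ is the two-element field, $R$ is an $\mathbb{F}_2$-vector space of dimension at most $\nu$, and therefore $\card{R}=2^{\dim_K R}\le 2^{\nu}$, which is precisely the asserted bound. The substantive content—converting the count of roots into the dimension of a kernel—is carried entirely by the previous lemma, so the remaining work is just the dictionary $g\leftrightarrow w(\sigma)$ together with the elementary fact that an $\mathbb{F}_2$-space of dimension $d$ has $2^d$ elements.

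The only point that needs care is the degree restriction $1\le m\le n$ in the previous lemma, which forces a separate look at the case $\nu\ge n$. This case is harmless, however: when $\nu\ge n$ one has $2^{\nu}\ge 2^{n}=\card{\GF{n}}$, so the bound holds trivially for \emph{any} subset of $\GF{n}$; and when $1\le \nu\le n$ the lemma applies directly. Beyond this bookkeeping I expect no obstacle, so the proof reduces to the translation step and the dimension-to-cardinality passage described above.
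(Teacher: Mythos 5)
Your proposal is correct and is exactly the route the paper intends: the paper derives this lemma as an immediate corollary of the preceding Galois-theoretic lemma by taking $\sigma(x)=x^{2^s}$ and $w(t)=\sum_{i=0}^{\nu}r_it^i$, just as you do. Your explicit treatment of the translation $g\leftrightarrow w(\sigma)$, the passage from $\dim_{\GF{}}R\le\nu$ to $\card{R}\le 2^{\nu}$, and the trivial case $\nu\ge n$ simply fills in details the paper leaves implicit.
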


A Boolean function $f$ is an $\V
{}$-valued function on the vectorspace $\V {n}$ over the prime field
$\V{}$ formed by all binary vectors of length $n$. We shall need a
representation of Boolean functions by univariate polynomials over
the Galois field $\GF{n}$ of order $2^n$. To this end, we identify
the field $\GF{n}$ with $\V {n}$ by choosing a basis of $\GF{n}$,
viewed as vector space over $\GF{}$. 
We denote the {\em absolute trace} over $\GF{}$ of an element $x \in
\GF {n}$ by $Tr_{1}^{n}(x) = \sum_{i=0}^{n-1} x^{2^i}$. The function
$Tr_{1}^n$ from $\GF{n}$ to its prime field $\GF{}$ is $\GF{}$-linear
and satisfies $(Tr_{1}^{n} (x))^2= Tr_{1}^{n}(x)=Tr_{1}^{n}(x^2)$ for every
$x\in \GF{n}$. The function $(x,y) \rightarrow Tr_{1}^{n}(xy)$ is an
inner product in $\GF{n}$. For any positive integer $k$, and $r$
dividing $k$, the trace function from $\GF{k}$ to $\GF{r}$, denoted
by $Tr_{r}^{k}$, is the mapping defined as:

\begin{displaymath}
  \forall x\in\GF{k} ,\quad
  Tr_{r}^{k}(x):=\sum_{i=0}^{\frac kr-1}
  x^{2^{ir}}=x+x^{2^r}+x^{2^{2r}}+\cdots+x^{2^{k-r}.}
\end{displaymath}

Recall that, for every integer $r$ dividing $k$, the trace function
$Tr_{r}^{k}$ satisfies the transitivity property.

.

Given an integer $e$, $0\leq e \leq 2^n-1$, having the binary
expansion: $e= \sum_{i=0}^{n-1} e_i2^i$, $e_i \in\{0,1\}$, the
2-weight of $e$, denoted by $w_2(e)$, is the Hamming weight of the
binary vector $(e_0, e_1,\cdots,e_{n-1})$.
Every non-zero Boolean function $f$ defined on $\GF{n}$ has a (unique)
trace expansion of the form:
\begin{equation}
  \forall x\in\GF n ,\quad
  f(x)=\sum_{j\in\Gamma_n}Tr_{1}^{{o(j)}}(a_jx^{j}) + \epsilon
  (1+x^{2^n-1}), \quad a_j \in
\GF{o(j)}
\end{equation}\label{traceform}
called its polynomial form, where $\Gamma_n$ is the set of integers
obtained by choosing one element in each cyclotomic class of $2$
modulo $2^n-1$, the most usual choice being the smallest element in
each cyclotomic class, called the coset leader of the class, and
$o(j)$ is the size of the cyclotomic coset containing $j$,
$\epsilon=\wt(f)$ modulo $2$. The algebraic degree of $f$, denoted by
$\deg (f)$, is equal to the maximum 2-weight of an exponent $j$ for
which $a_j\not= 0$ if $\epsilon=0$ and to $n$ if $\epsilon=1$.  Note
that $\epsilon=0$ when $\wt(f)$ is even, that is, when the algebraic
degree of $f$ is less than $n$. Note that when the integers modulo
$2^{n}-1$ are partitioned into cyclotomic classes of 2 modulo $2^n-1$,
all the elements in a cyclotomic class have the same 2-weight.

From now, we  shall denote $Tr$  the trace function from $\GF{n}$ to  $\GF{}$ defined by
$Tr(x)=x+x^2+x^{2^2}+\cdots+x^{2^{n-1}}$. 

A Boolean function on
$\GF{n}$ is a function can be expressed as $Tr(g[x])$, where $g[x]$ is any
polynomial in $\GF{n}[x]$. The Hamming weight of binary
representation of integer $\deg g[x]$ is the degree of
Boolean function $Tr(g[x])$ on $\GF{n}$. The (Hamming) distance between
Boolean functions $f_1$ and $f_2$ is defined by
$d(f_1,f_2)=\# \{\, x\in \GF{n} \, |\, f_1(x)\neq f_2(x) \,
\}$.

Let $f$ be any $n-$variable Boolean function on $\GF{n}$. The $r-$th
order nonlinearity of $f$, denoted by $nl_r(f)$, is the minimum
Hamming distance between $f$ and all $n-$variable Boolean functions
of degree at most  $r$, a nonnegative integer less than or equal to
$n$. The sequence of values $nl_r(f)$ for $r$ ranging from 1 to
$n-1$ is said to be the nonlinearity profile of  $f$. The first
order nonlinearity of $f$ is referred to as the nonlinearity of $f$
and denoted by $nl(f)$.

The Walsh transform of function $f$ at $u\in \GF{n}$ is defined by
\[
W_f(u)=\sum_{x\in \GF{n}}(-1)^{f(x)+Tr(ux)}, u\in \GF{n},
\]
and the Walsh spectrum of  $f$ as the set $\{\,W_f(u)\,|\,u\in
\GF{n}\,\}$. The nonlinearity and the Walsh
transform of $f$ are
 related as:
\begin{align}\label{Wal-nonli}
nl(f)=2^{n-1}-\frac{1}{2}\max_{u\in \GF{n}}|W_f(u)|.
\end{align}

The derivative of $f$ with respect to $b\in \GF{n}$ is the Boolean
function  $D_bf:x\mapsto f(x)+f(x+b)$. The kernel $\varepsilon_f$ of
quadratic Boolean function  $f$ is the $\GF{}-$linear subspace of
$\GF{n}$, defined by $\varepsilon_f=\{\,x\in \GF{n}\,|\,\forall y\in
\GF{n}, f(0)+f(x)+f(y)+f(x+y)=0\,\}$.

\begin{lemma}\cite{3}
Let  $f$ be any quadratic Boolean function. The kernel
$\varepsilon_f$ of $f$ is the subspace consisting of those  $b\in
\GF{n}$  such that the derivative $D_bf$ is constant.
\end{lemma}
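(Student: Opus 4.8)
The plan is to prove the asserted set equality by a direct computation over $\GF{}$, using throughout that addition and subtraction coincide in characteristic two. First I would name the symmetric expression $B(x,y):=f(0)+f(x)+f(y)+f(x+y)$, so that by definition $\varepsilon_f=\{x\in\GF{n}\mid B(x,y)=0 \text{ for all } y\in\GF{n}\}$. When $f$ is quadratic this $B$ is a $\GF{}$-bilinear form, which is exactly why $\varepsilon_f$, being its radical, is a $\GF{}$-linear subspace; this is what justifies the word ``subspace'' in the statement, so the substantive content to establish is the description of $\varepsilon_f$ in terms of the derivatives $D_bf$.

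Next I would translate the condition ``$D_bf$ is constant'' into a statement about $B$. Since $D_bf(0)=f(0)+f(b)$, the derivative $D_bf$ is constant if and only if $D_bf(x)=D_bf(0)$ for every $x$, i.e. $f(x)+f(x+b)=f(0)+f(b)$ for all $x$. Moving everything to one side in $\GF{}$, this is equivalent to $f(0)+f(x)+f(b)+f(x+b)=0$ for all $x$, which is precisely the condition $B(x,b)=0$ for all $x$.

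The two inclusions then follow at once. If $b\in\varepsilon_f$, then $B(b,x)=0$ for all $x$, hence $B(x,b)=0$ for all $x$ by symmetry of $B$, so $D_bf$ is constant with value $f(0)+f(b)$. Conversely, if $D_bf$ is constant, evaluating at $x=0$ identifies that constant as $f(0)+f(b)$, whence $B(x,b)=0$ for all $x$ and therefore $b\in\varepsilon_f$. This yields $\varepsilon_f=\{b\in\GF{n}\mid D_bf \text{ is constant}\}$, as required.

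I do not anticipate a genuine obstacle here: the argument is a one-line algebraic manipulation carried out in both directions. The only points requiring care are the sign bookkeeping in characteristic two and the use of the symmetry $B(x,b)=B(b,x)$, which is what allows the universally quantified $y$ in the definition of $\varepsilon_f$ to be matched against the universally quantified $x$ that arises from constancy of $D_bf$.
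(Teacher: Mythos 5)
Your argument is correct: the set equality between $\varepsilon_f$ and $\{b\mid D_bf\text{ constant}\}$ follows exactly as you say from the symmetry of $B(x,y)=f(0)+f(x)+f(y)+f(x+y)$ and the observation that constancy of $D_bf$ forces its value to be $D_bf(0)=f(0)+f(b)$. The paper states this lemma only as a cited result from Canteaut--Charpin--Kyureghyan and gives no proof of its own, so there is nothing to compare against; your verification is the standard one and is complete.
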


\begin{lemma}\cite{3}\label{parity}
The dimension of the kernel  $\varepsilon_f$ of quadratic Boolean
function $f$  on $\GF{n}$ has the same parity as one of $n$.
\end{lemma}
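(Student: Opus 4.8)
The plan is to recognize $\varepsilon_f$ as the radical of the alternating bilinear form naturally attached to a quadratic function, and then to invoke the classical fact that a nondegenerate alternating form can only live on an even-dimensional space. Concretely, I would set
\[
B_f(x,y) = f(0)+f(x)+f(y)+f(x+y),
\]
so that, by the very definition recalled just above the statement, $\varepsilon_f = \{x\in\GF{n} : B_f(x,y)=0 \text{ for all } y\in\GF{n}\}$ is exactly the radical of $B_f$.

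First I would verify that $B_f\colon \GF{n}\times\GF{n}\to\GF{}$ is a symmetric $\GF{}$-bilinear form. Symmetry is immediate from the expression. For bilinearity the essential input is that $f$ is quadratic, i.e. of algebraic degree at most $2$, so that every third-order derivative of $f$ vanishes identically. Expanding $B_f(x,y+z)+B_f(x,y)+B_f(x,z)$ and cancelling the repeated terms produces precisely the eight-term sum $\sum_{S\subseteq\{x,y,z\}} f\!\left(\sum_{u\in S}u\right)$, which is the value at $0$ of the third derivative $D_xD_yD_zf$ and hence equals $0$. Combined with $B_f(x,0)=0$, this gives additivity, and therefore $\GF{}$-linearity, in the second argument; symmetry then transfers it to the first. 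Finally $B_f(x,x)=f(0)+f(x)+f(x)+f(0)=0$, so $B_f$ is alternating.

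Next I would pass to the quotient space $V = \GF{n}/\varepsilon_f$, on which $B_f$ induces a well-defined alternating form that is, by construction of the radical, nondegenerate. The parity is then delivered by the structure theorem for alternating forms: any nondegenerate alternating bilinear form over a field admits a symplectic basis, decomposing $V$ into mutually orthogonal hyperbolic planes, whence $\dim_{\GF{}}V$ is even. Consequently $\dim_{\GF{}}\varepsilon_f = n - \dim_{\GF{}}V \equiv n \pmod 2$, as claimed. The only genuinely delicate point in the write-up is the bilinearity check, where the quadratic hypothesis must be used in the form ``all third derivatives vanish''; once $B_f$ is known to be a nondegenerate alternating form on $V$, the symplectic classification carries the parity for free, so I expect the main effort to go into making the third-derivative cancellation explicit.
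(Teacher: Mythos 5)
Your proof is correct. The paper does not prove this lemma at all---it is quoted from \cite{3} without argument---and yours is the standard derivation: $B_f$ is the symmetrized second derivative, the quadratic hypothesis (vanishing third derivatives, via your eight-term expansion) gives $\GF{}$-bilinearity, $B_f(x,x)=f(0)+f(x)+f(x)+f(0)=0$ makes the form alternating, and the symplectic classification of the nondegenerate form induced on $\GF{n}/\varepsilon_f$ forces that quotient to have even dimension, so $\dim_{\GF{}}\varepsilon_f\equiv n\pmod 2$.
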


\begin{lemma}\cite{3}\label{kerdim}
The Walsh Spectrum of quadratic Boolean function $f$  depends only
on the dimension $k$ of the kernel. The weight distribution of the
Walsh spectrum is
\begin{center}
\begin{tabular}{|c|c|}
\hline $W_f(u)$               &\text{Number of} $u\in
\GF{n}$\\\hline
 $0$                     &$2^n-2^{n-k}$\\ \hline
 $2^{\frac{n+k}{2}}$   &$2^{n-k-1}+(-1)^{f(0)}2^{\frac{n-k-2}{2}}$\\ \hline
 $-2^{\frac{n+k}{2}}$  &$2^{n-k-1}-(-1)^{f(0)}2^{\frac{n-k-2}{2}}$\\ \hline
\end{tabular}
\end{center}
\end{lemma}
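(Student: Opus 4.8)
The plan is to reduce everything to the evaluation of the squared Walsh transform, exploiting that the derivative $D_bf$ of a quadratic function is affine in the base point. First I would write
\[
W_f(u)^2=\sum_{x,y\in\GF{n}}(-1)^{f(x)+f(y)+Tr(u(x+y))}
\]
and substitute $y=x+b$ to obtain $W_f(u)^2=\sum_{b\in\GF{n}}(-1)^{f(b)+f(0)+Tr(ub)}\sum_{x\in\GF{n}}(-1)^{B(x,b)}$, where $B(x,b)=f(0)+f(x)+f(b)+f(x+b)$ is the associated symmetric bilinear form. For fixed $b$ the map $x\mapsto B(x,b)$ is $\V{}$-linear, so the inner sum equals $2^n$ when $b\in\varepsilon_f$ (by the characterization of $\varepsilon_f$ as the radical of $B$, cf. the preceding lemma) and $0$ otherwise.

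Next I would restrict attention to $b\in\varepsilon_f$. On this subspace the relation $B\equiv 0$ forces $b\mapsto f(b)+f(0)$ to be a linear form; hence $b\mapsto f(b)+f(0)+Tr(ub)$ is itself a linear form on $\varepsilon_f$. Therefore
\[
W_f(u)^2=2^n\sum_{b\in\varepsilon_f}(-1)^{f(b)+f(0)+Tr(ub)}\in\{0,\,2^{n+k}\},
\]
the value being $2^{n+k}$ exactly when this linear form vanishes identically on $\varepsilon_f$, and $0$ otherwise. This already proves that $W_f(u)\in\{0,\pm 2^{(n+k)/2}\}$ and that the spectrum depends only on $k=\dim_{\V{}}\varepsilon_f$, the parity constraint of Lemma \ref{parity} guaranteeing that $\frac{n+k}{2}$ is an integer.

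To count the multiplicities, I would observe that $W_f(u)\neq 0$ holds precisely when $Tr(ub)=f(b)+f(0)$ for every $b\in\varepsilon_f$. As $u$ ranges over $\GF{n}$, the functionals $b\mapsto Tr(ub)$ realize every linear form on $\varepsilon_f$, and the set of $u$ solving this inhomogeneous system is a coset of the orthogonal complement $\varepsilon_f^{\perp}$ taken with respect to the trace inner product; hence there are exactly $2^{n-k}$ values of $u$ with $W_f(u)\neq 0$, and $2^n-2^{n-k}$ with $W_f(u)=0$. Writing $N_{+}$ and $N_{-}$ for the number of $u$ attaining $+2^{(n+k)/2}$ and $-2^{(n+k)/2}$, I have $N_{+}+N_{-}=2^{n-k}$. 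A second equation comes from the elementary identity $\sum_{u}W_f(u)=2^n(-1)^{f(0)}$, obtained by swapping the order of summation and using orthogonality of characters, which yields $N_{+}-N_{-}=(-1)^{f(0)}2^{(n-k)/2}$. Solving this $2\times 2$ system gives the two entries $2^{n-k-1}\pm(-1)^{f(0)}2^{(n-k-2)/2}$ of the table.

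The computations are all routine; the only genuine point requiring care is the reduction step, namely verifying that $b\mapsto f(b)+f(0)$ is linear on $\varepsilon_f$ so that the inner sum collapses cleanly to the $0/2^{n+k}$ dichotomy, and, for the sign split, correctly tracking the factor $(-1)^{f(0)}$ through the identity for $\sum_u W_f(u)$. I expect no serious obstacle beyond bookkeeping, since the structure of quadratic forms over $\V{}$ makes each step transparent.
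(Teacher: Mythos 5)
Your proof is correct. The paper states this lemma as a cited result from Canteaut--Charpin--Kyureghyan and gives no proof of its own, so there is nothing to compare against; your argument is the standard one (square the Walsh transform, substitute $y=x+b$, collapse the inner character sum onto the radical $\varepsilon_f$, use the linearity of $b\mapsto f(b)+f(0)$ on $\varepsilon_f$ to get the $0$ versus $2^{n+k}$ dichotomy, then count multiplicities from $N_++N_-=2^{n-k}$ and $\sum_u W_f(u)=2^n(-1)^{f(0)}$), and all the steps you flag as routine do indeed check out.
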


\textbf{Note} Any quadratic Boolean form can be represented by
$Tr(\sum_{i=0}^{\lfloor\frac{n}{2}\rfloor}\delta_ix^{2^i+1}),
\delta_i\in \GF{n}$ \cite{9}.

Any cubic Boolean function over $\GF{n}$ can be written as
\begin{equation}\label{cubic}
f(x)=Tr(xQ(x))+Tr(xL(x))+a(x),
\end{equation}
where $Q$ is a quadratic polynomial, $L$ is a linearized polynomial
and $a$ is an affine Boolean function. Denote $\phi$ the polar form
associated to $Q$: $\phi(x,y)=Q(x+y)+Q(x)+Q(y)$.

Set $\widetilde{f}(x)=Tr(xQ(x))$ for every $x\in \GF{n}$. Note that
$nl_2(\widetilde{f})=nl_2(f)$. Now, for $a\in \GF{n}^*$,
\begin{align*}
D_a\widetilde{f}(x)&=Tr((x+a)Q(x+a)+xQ(x))\\
&=Tr(x\phi(a,x)+aQ(x))+Tr(xQ(a)+a\phi(a,x)+aQ(a)).
\end{align*}
Hence, $nl(D_a\widetilde{f})=nl(\psi_a)$, where for every $x\in
\GF{n}$
\[
\psi_a(x)=Tr(x\phi(a,x)+aQ(x)).
\]

By the relation \eqref{Wal-nonli} and Lemma \ref{kerdim}, the
nonlinearity of a nonzero quadratic form can be expressed in terms
of its radical:
\[
nl(\psi_a)=2^{n-1}-2^{\frac{n+r_a}{2}-1}
\]
where $r_a$ is the dimension of the vector space
$\varepsilon_{f,a}:=\{x\in \GF{n}|\forall y\in\GF{n}, B_a(x,y)=0\}$
over $\GF{}$, i.e. the radical of  $\psi_a$, where $B_a$ is the
polar form of $\psi_a$: $B_a=a\phi(x,y)+x\phi(a,y)+y\phi(a,x)$. Note
always $a\in\varepsilon_{f,a}$ and therefore
\begin{equation}
r_a\geq 1, \text{   for every   } a\in \GF{n}^*.
\end{equation}
The reader can consult \cite{Cbook1} for more background on Boolean functions.

\section{Known results on the lower bounds on the second-order nonlinearity of Boolean functions }\label{known_result}
 Let us now recall the following lower
bound on the second-order nonlinearity of Boolean functions. Let $f$
be any Boolean function on $\GF{n}$  and $r$  a positive integer
smaller than  $n$.
\begin{theorem}\cite{5}
\begin{equation}
nl_r(f)\geq \frac{1}{2}\max_{a\in \GF{n}}nl_{r-1}(D_af).
\end{equation}
\end{theorem}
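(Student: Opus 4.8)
The plan is to fix an optimal approximant to $f$ and then push the approximation through the derivative operator. First I would choose a Boolean function $g$ of algebraic degree at most $r$ achieving the minimum in the definition of $nl_r$, so that $d(f,g)=nl_r(f)$; such a $g$ exists since there are only finitely many Boolean functions on $\GF{n}$. The guiding observation is that differentiation lowers algebraic degree: for every $a\in\GF{n}$ the function $D_ag$ has degree at most $r-1$, hence it is an admissible competitor in the definition of $nl_{r-1}(D_af)$. This yields immediately $nl_{r-1}(D_af)\leq d(D_af,D_ag)$ for every $a$.

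Next I would rewrite the right-hand side in terms of $f+g$. Since the derivative operator is additive, $D_af+D_ag=D_a(f+g)$, so the distance $d(D_af,D_ag)$ equals the number of $x\in\GF{n}$ for which $(f+g)(x)\neq(f+g)(x+a)$, that is, the Hamming weight $\wt(D_a(f+g))$. Writing $h=f+g$, the whole problem reduces to bounding $\wt(D_ah)$ by a constant multiple of $\wt(h)=d(f,g)=nl_r(f)$.

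The core of the argument is the elementary inequality $\wt(D_ah)\leq 2\,\wt(h)$. Let $S=\{x\in\GF{n}\mid h(x)=1\}$, so that $|S|=\wt(h)$. Whenever $D_ah(y)=1$, exactly one of $h(y)$ and $h(y+a)$ equals $1$, which forces either $y\in S$ or $y+a\in S$; in characteristic $2$ the latter means $y\in S+a$. Hence the support of $D_ah$ is contained in $S\cup(S+a)$, a set of cardinality at most $|S|+|S+a|=2\,\wt(h)$. This support-doubling step is the one genuinely substantive point, and although it is short it is the part I would be most careful to state cleanly; everything surrounding it is bookkeeping.

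Chaining the three steps gives, for every $a\in\GF{n}$, the chain $nl_{r-1}(D_af)\leq d(D_af,D_ag)=\wt(D_ah)\leq 2\,nl_r(f)$. Taking the maximum over all $a\in\GF{n}$ (the value $a=0$ gives $D_0f=0$ and contributes $0$, so it is harmless) and dividing by $2$ yields the claimed bound $nl_r(f)\geq\tfrac12\max_{a\in\GF{n}}nl_{r-1}(D_af)$.
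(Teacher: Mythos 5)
Your proof is correct and is essentially the standard argument for this result (the paper itself only cites it from Carlet's work without reproducing a proof): pick an optimal degree-$r$ approximant $g$, note $D_ag$ has degree at most $r-1$, and bound $d(D_af,D_ag)=\wt(D_a(f+g))\leq 2\,d(f,g)$ via the support-doubling observation. All steps, including the key inequality $\wt(D_ah)\leq 2\,\wt(h)$, are sound.
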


\begin{theorem}\cite{5}\label{r-nonli-r-1}
\begin{equation}
nl_r(f)\geq 2^{n-1}-\frac{1}{2}\sqrt{2^{2n}-2\sum_{a\in
\GF{n}}nl_{r-1}(D_af)}.
\end{equation}
\end{theorem}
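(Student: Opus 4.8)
The plan is to reformulate the $r$-th order nonlinearity as a maximal correlation and then apply a squaring (second-moment) argument that transfers a single global degree-$\leq r$ approximation of $f$ into per-derivative degree-$\leq(r-1)$ approximations. First I would write
\[
nl_r(f)=2^{n-1}-\tfrac12\,\mathcal F,\qquad \mathcal F:=\max_{\deg g\leq r}\sum_{x\in\GF{n}}(-1)^{f(x)+g(x)},
\]
using $d(f,g)=2^{n-1}-\tfrac12\sum_x(-1)^{f(x)+g(x)}$ together with the observation that replacing $g$ by $g+1$ (still of degree $\leq r$) flips the sign of the inner sum, so that $\mathcal F\geq 0$ and equals the maximal absolute correlation. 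It then suffices to produce an upper bound on $\mathcal F$.

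Next I would fix a function $g$ of degree at most $r$ achieving $\mathcal F$ and square its defining sum. Expanding the square as a double sum over $x,y\in\GF{n}$ and substituting $y=x+a$ gives
\[
\mathcal F^2=\sum_{a\in\GF{n}}\sum_{x\in\GF{n}}(-1)^{D_af(x)+D_ag(x)}.
\]
The key structural fact is that differentiation lowers algebraic degree: since $\deg g\leq r$, each $D_ag$ has degree at most $r-1$.

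The decoupling step is the core of the argument. For each fixed $a$, $D_ag$ is merely one particular function of degree $\leq r-1$, so the inner sum cannot exceed the best possible correlation of $D_af$ with any such function:
\[
\sum_{x}(-1)^{D_af(x)+D_ag(x)}\leq \max_{\deg h\leq r-1}\sum_x(-1)^{D_af(x)+h(x)}=2^n-2\,nl_{r-1}(D_af),
\]
where the $a=0$ term contributes $2^n$, consistent with $nl_{r-1}(0)=0$. Summing over all $a\in\GF{n}$ yields $\mathcal F^2\leq 2^{2n}-2\sum_{a}nl_{r-1}(D_af)$, and since $\mathcal F\geq 0$, taking square roots and substituting back into the first display gives the claimed inequality.

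I expect the main obstacle to be the decoupling step, in which the single global approximant $g$ is replaced, independently for each $a$, by the optimal degree-$\leq(r-1)$ approximant of $D_af$; this is exactly where the slack in the bound is introduced. One must also be careful that the quantity $2^n-2\,nl_{r-1}(D_af)$ arising here is the \emph{maximal} correlation rather than its absolute value, so that summing the per-$a$ inequalities remains legitimate even when individual inner sums happen to be negative.
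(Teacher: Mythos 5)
Your proof is correct and is essentially the standard argument of Carlet \cite{5}: express $nl_r(f)$ via the maximal correlation with degree-$\leq r$ functions, square the optimal correlation sum to obtain a sum over derivatives, use that $D_ag$ has degree at most $r-1$, and bound each inner sum by $2^n-2\,nl_{r-1}(D_af)$. The paper states this theorem without proof (it is quoted from \cite{5}), and your derivation, including the handling of the $a=0$ term and of possibly negative inner sums, matches the original proof.
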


If we apply these lower bounds to a cubic function of the form
\eqref{cubic}, we get
\[
nl_2(f)\geq \max\left(
\frac{1}{2}\max_{a\in\GF{n}^*}(2^{n-1}-2^{\frac{n+r_a}{2}-1}),
2^{n-1}-\frac{1}{2}\sqrt{2^{2n}-2\sum_{a\in\GF{n}}(2^{n-1}-2^{\frac{n+r_a}{2}-1})}\right),
\]
or,
\begin{equation}\label{Mesnager}
nl_2(f)\geq
\max\left(2^{n-2}-\frac{1}{4}\min_{a\in\GF{n}^*}2^{\frac{n+r_a}{2}},
2^{n-1}-\frac{1}{2}\sqrt{2^n+\sum_{a\in\GF{n}^*}2^{\frac{n+r_a}{2}}}\right).
\end{equation}

From \eqref{Mesnager}, immediately it follows:
\begin{corollary}\cite{5}\label{cor6}
For any cubic Boolean function $f$ no possessing affine
derivatives,
\begin{equation}
nl_2(f)\geq 2^{n-1}-2^{n-\frac{3}{2}}
\end{equation}
\end{corollary}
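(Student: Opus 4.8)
The plan is to translate the hypothesis ``no affine derivatives'' into information about the radical dimensions $r_a$ and then feed that into the second estimate inside the maximum of \eqref{Mesnager}. First I would unwind what the hypothesis means for the quadratic forms $\psi_a$. Writing $f$ in the form \eqref{cubic} and passing to $\widetilde f(x)=Tr(xQ(x))$, which satisfies $nl_2(\widetilde f)=nl_2(f)$, the difference $f-\widetilde f=Tr(xL(x))+a(x)$ has degree at most $2$, so $D_af$ and $D_a\widetilde f$ differ only by an affine function; moreover $nl(D_a\widetilde f)=nl(\psi_a)$. Hence $D_af$ is affine exactly when $\psi_a$ is affine, i.e. when the polar form $B_a$ vanishes identically, i.e. when $\varepsilon_{f,a}=\GF{n}$ and $r_a=n$. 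Thus the assumption that $f$ has no affine derivative is precisely the statement that $r_a<n$ for every $a\in\GF{n}^*$.

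The crucial step is to sharpen $r_a<n$ to $r_a\le n-2$, and this is where Lemma \ref{parity} enters: the radical dimension of a quadratic form has the same parity as $n$, so $r_a\equiv n\pmod 2$, and combined with $r_a<n$ this forces $r_a\le n-2$. I want to stress that the parity lemma is doing the essential work: the bare inequality $r_a\le n-1$ would only give $2^{\frac{n+r_a}{2}}\le 2^{n-\frac12}$ and a visibly weaker constant, whereas $r_a\le n-2$ yields the decisive bound $2^{\frac{n+r_a}{2}}\le 2^{n-1}$, valid for each of the $2^n-1$ nonzero values of $a$.

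With this in hand I would substitute into the second term of \eqref{Mesnager}. Bounding the sum termwise,
\[
\sum_{a\in\GF{n}^*}2^{\frac{n+r_a}{2}}\le (2^n-1)\,2^{n-1}<2^{2n-1},
\]
so the radicand $2^n+\sum_{a\in\GF{n}^*}2^{\frac{n+r_a}{2}}$ is dominated by $2^{2n-1}$, whence $\tfrac12\sqrt{\,\cdot\,}\le 2^{n-\frac{3}{2}}$ and therefore $nl_2(f)\ge 2^{n-1}-2^{n-\frac{3}{2}}$, as claimed.

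The main obstacle I anticipate is precisely this last arithmetic step. The honest termwise estimate produces the radicand $(2^n-1)2^{n-1}+2^n=2^{2n-1}+2^{n-1}$ rather than exactly $2^{2n-1}$, so $\tfrac12\sqrt{\,\cdot\,}$ overshoots $2^{n-\frac{3}{2}}$ by a quantity of order $2^{-5/2}$. Since $nl_2(f)$ is an integer while $2^{n-1}-2^{n-\frac{3}{2}}$ is irrational, this negligible gap does not affect the rounded bound; I would therefore either absorb it by recording the exact inequality $nl_2(f)\ge 2^{n-1}-\tfrac12\sqrt{2^{2n-1}+2^{n-1}}$ and presenting $2^{n-1}-2^{n-\frac{3}{2}}$ as its clean dominant-term form, or note that $\lceil 2^{n-1}-\tfrac12\sqrt{2^{2n-1}+2^{n-1}}\rceil=\lceil 2^{n-1}-2^{n-\frac{3}{2}}\rceil$.
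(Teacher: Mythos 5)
Your proposal is correct and follows exactly the route the paper intends: the paper offers no argument beyond ``immediately follows from \eqref{Mesnager}'', and the intended unwinding is precisely yours --- absence of affine derivatives gives $r_a<n$, Lemma \ref{parity} upgrades this to $r_a\le n-2$, and substituting into the second branch of \eqref{Mesnager} gives the bound, with your observation that the honest radicand is $2^{2n-1}+2^{n-1}$ (so the displayed constant $2^{n-\frac{3}{2}}$ is a dominant-term simplification inherited from \cite{5}) being an accurate diagnosis rather than a defect of your argument. One small caution: of your two proposed fixes, prefer the first (state the exact inequality $nl_2(f)\ge 2^{n-1}-\tfrac12\sqrt{2^{2n-1}+2^{n-1}}$ and present the corollary as its leading-term form), since the ceiling identity $\lceil 2^{n-1}-\tfrac12\sqrt{2^{2n-1}+2^{n-1}}\rceil=\lceil 2^{n-1}-2^{n-\frac{3}{2}}\rceil$ is not automatic --- the two quantities differ by roughly $2^{-5/2}$ and whether an integer falls between them depends on the fractional part of $2^{n-2}\sqrt{2}$, so it would need verification for each $n$.
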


Gode and Gangopadhyay \cite{7} have improved on this for monomial
Boolean functions:
\begin{theorem}\cite{7}\label{Gode-Gango-F}
Let $f_\mu(x)=Tr(\mu x^{2^i+2^j+1})$, where $\mu \in \GF{n}$, and
$i,j$ are integers such that $n>i>j>0$.

For  $n>2i$, if $n$  is an even, then
\begin{equation}
nl_2(f_\mu)\geq
2^{n-1}-\frac{1}{2}\sqrt{2^n+(2^n-1)2^{\frac{n+2i}{2}}},
\end{equation}
and if  $n$ is an odd, then
\begin{equation}
nl_2(f_\mu)\geq
2^{n-1}-\frac{1}{2}\sqrt{2^n+(2^n-1)2^{\frac{n+2i-1}{2}}}.
\end{equation}
\end{theorem}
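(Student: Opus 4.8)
The plan is to treat $f_\mu$ as a cubic Boolean function and reduce the whole problem to a uniform upper bound on the radical dimensions $r_a$ of its derivatives, fed into the square-root estimate \eqref{Mesnager}. Since $2^i+2^j+1$ has $2$-weight $3$, the function $f_\mu$ is cubic, and in the notation of \eqref{cubic} one may take $Q(x)=\mu x^{2^i+2^j}$ with $L=0$ and no affine part, so that $\widetilde f=f_\mu$. Keeping only the second term of the maximum in \eqref{Mesnager} gives
\begin{equation*}
nl_2(f_\mu)\geq 2^{n-1}-\frac12\sqrt{2^n+\sum_{a\in\GF{n}^*}2^{\frac{n+r_a}{2}}},
\end{equation*}
so it suffices to prove $r_a\leq 2i$ for every $a\in\GF{n}^*$ when $n$ is even, and $r_a\leq 2i-1$ when $n$ is odd. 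Substituting a uniform bound $r_a\leq M$ and using that $\GF{n}^*$ has $2^n-1$ elements bounds the sum by $(2^n-1)2^{(n+M)/2}$, which delivers both claimed inequalities at once (with $M=2i$ and $M=2i-1$ respectively).

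First I would compute the derivative $D_af_\mu(x)=Tr\!\left(\mu[(x+a)^{2^i+2^j+1}+x^{2^i+2^j+1}]\right)$. Using additivity of the Frobenius map to expand $(x+a)^{2^i+2^j+1}$ and discarding the terms that are affine in $x$, the quadratic part of $D_af_\mu$ is
\begin{equation*}
Q_a(x)=Tr\!\left(\mu a^{2^j}x^{2^i+1}+\mu a^{2^i}x^{2^j+1}+\mu a\, x^{2^i+2^j}\right).
\end{equation*}
Its symmetric bilinear form $B_a(x,y)=Q_a(x+y)+Q_a(x)+Q_a(y)$ can, after applying $Tr(z)=Tr(z^{2^k})$ to shift all Frobenius powers onto $y$, be written as $B_a(x,y)=Tr(y\,L_a(x))$ with the linearized polynomial
\begin{equation*}
L_a(x)=\mu a^{2^j}x^{2^i}+(\mu a^{2^j})^{2^{-i}}x^{2^{-i}}+\mu a^{2^i}x^{2^j}+(\mu a^{2^i})^{2^{-j}}x^{2^{-j}}+(\mu a)^{2^{-j}}x^{2^{i-j}}+(\mu a)^{2^{-i}}x^{2^{-(i-j)}}.
\end{equation*}
Since the trace is a nondegenerate pairing, the radical $\varepsilon_{f,a}$ equals $\ker L_a$, hence $r_a=\dim_{\GF{}}\ker L_a$.

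The key step is then to bound this kernel. Raising the equation $L_a(x)=0$ to the power $2^i$ (a bijection, so it preserves the solution set) clears the negative Frobenius exponents and produces a genuine linearized polynomial whose monomials are $x^{2^k}$ with $k\in\{0,\,i-j,\,j,\,i+j,\,2i-j,\,2i\}$. Because $n>2i$, all these exponents lie in $[0,n)$ and the top coefficient, that of $x^{2^{2i}}$, equals $(\mu a^{2^j})^{2^i}$, which is nonzero for $a\neq0$; thus the polynomial has $2$-degree exactly $2^{2i}$. Lemma \ref{lemrootnum} (taking $s=1$) then yields at most $2^{2i}$ roots, i.e. $r_a\leq 2i$. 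Finally, Lemma \ref{parity} forces $r_a$ to share the parity of $n$: for even $n$ this leaves $r_a\leq 2i$, and for odd $n$ it sharpens to $r_a\leq 2i-1$.

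The main obstacle is the bookkeeping of the second step: expanding $(x+a)^{2^i+2^j+1}$ correctly, isolating the three genuinely quadratic monomials, and collapsing their polar forms into a single $L_a$ whose Frobenius exponents sit symmetrically about $0$. Once $L_a$ is in hand, the crux is verifying that its exponent span is exactly $2i$ (so clearing denominators gives $2$-degree $2^{2i}$), since this is precisely what produces the value $2i$ in the exponent of the final bound; the parity refinement and the substitution back into \eqref{Mesnager} are then routine.
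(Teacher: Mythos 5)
Your proof is correct and follows exactly the route this paper's own machinery takes: compute the quadratic part of $D_af_\mu$, pass to the linearized polynomial of its polar form as in \eqref{polarform}, bound its kernel by $2^{2i}$ via Lemma \ref{lemrootnum} after clearing the negative Frobenius exponents (this is the choice $k=i$, all $k_j=0$ in the language of Theorem \ref{corrootnum}), apply the parity refinement of Lemma \ref{parity}, and substitute into \eqref{Mesnager}. The paper only cites this theorem from \cite{7} without reproving it, but your argument is precisely the specialization of Lemma \ref{lemfunda} and Theorem \ref{maintheorem} to the monomial case, including the necessary check that the leading coefficient $(\mu a^{2^j})^{2^i}$ survives so that no derivative is affine.
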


\begin{theorem}\cite{7}\label{Gode-Gango-G}
Let $g_\mu(x)=Tr(\mu x^{2^{2r}+2^r+1})$, where $\mu \in \GF{n}$ and
$\gcd (n,r)=1$.

For $n>3$, if $n$  is an even, then
\begin{equation}
nl_2(g_\mu)\geq
2^{n-1}-\frac{1}{2}\sqrt{2^n+(2^n-1)2^{\frac{n+4}{2}}},
\end{equation}
and if  $n$ is an odd, then
\begin{equation}
nl_2(g_\mu)\geq
2^{n-1}-\frac{1}{2}\sqrt{2^n+(2^n-1)2^{\frac{n+3}{2}}}.
\end{equation}
\end{theorem}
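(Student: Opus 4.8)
The plan is to realize $g_\mu$ as a cubic function of the form \eqref{cubic} and then control the quantities $r_a$ attached to its derivatives, since by \eqref{Mesnager} a uniform upper bound on the $r_a$ immediately yields the desired lower bound on $nl_2(g_\mu)$. Writing $Q(x)=\mu x^{2^r+2^{2r}}$, $L=0$ and $a=0$, one has $g_\mu(x)=Tr(xQ(x))$, with polar form $\phi(x,y)=\mu\bigl(x^{2^r}y^{2^{2r}}+x^{2^{2r}}y^{2^r}\bigr)$. Hence the quadratic form governing the derivative $D_a\widetilde{g_\mu}$ is
\[
\psi_a(x)=Tr\bigl(\mu a^{2^r}x^{1+2^{2r}}+\mu a^{2^{2r}}x^{1+2^r}+\mu a\,x^{2^r+2^{2r}}\bigr),
\]
and by Lemma \ref{kerdim} its nonlinearity equals $2^{n-1}-2^{\frac{n+r_a}{2}-1}$, where $r_a=\dim_{\GF{}}\varepsilon_{g_\mu,a}$. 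Everything therefore reduces to bounding $r_a$.

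Next I would express the radical $\varepsilon_{g_\mu,a}$ as the kernel of an explicit linearized polynomial. Starting from the associated Boolean bilinear form $Tr(B_a(x,y))$ and using the Frobenius-invariance of the trace to push every occurrence of $y$ down to $y^{2^0}$, one collects $Tr(B_a(x,y))=Tr\bigl(y\,\Lambda_a(x)\bigr)$ with
\[
\Lambda_a(x)=\mu a^{2^{2r}}x^{2^r}+\mu a^{2^r}x^{2^{2r}}+\mu^{2^{-r}}a^{2^r}x^{2^{-r}}+\mu^{2^{-r}}a^{2^{-r}}x^{2^r}+\mu^{2^{-2r}}a^{2^{-r}}x^{2^{-2r}}+\mu^{2^{-2r}}a^{2^{-2r}}x^{2^{-r}},
\]
so that, by non-degeneracy of the trace pairing, $\varepsilon_{g_\mu,a}=\{x\in\GF{n}\mid\Lambda_a(x)=0\}$ (here $2^{-r}$ abbreviates the exponent $2^{n-r}$). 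Since raising to the $2^{2r}$-th power is a bijection of $\GF{n}$ that preserves this kernel, I replace $\Lambda_a$ by $\Lambda_a^{2^{2r}}$, whose monomials in $x$ carry only the Frobenius exponents $2^{0},2^{r},2^{3r},2^{4r}$.

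The crucial step is the root count. Because $\gcd(n,r)=1$, the map $\sigma(x)=x^{2^r}$ generates $\mathrm{Gal}(\GF{n}/\GF{})$, and $\Lambda_a^{2^{2r}}$ has the shape $\sum_{i\in\{0,1,3,4\}}c_i x^{2^{ri}}$, so its leading index is $4$. Lemma \ref{lemrootnum} (applied with $s=r$) then gives at most $2^4$ roots, whence $r_a\le 4$ for every $a\in\GF{n}^*$. I expect this normalization-and-counting step to be the main obstacle: one must check that the exponent bookkeeping really places the polynomial inside the hypotheses of Lemma \ref{lemrootnum} with top index exactly $4$, and that no coefficient collapse spoils the argument. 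Here the coefficient $\mu^{2^{2r}}a^{2^{3r}}$ of $x^{2^{4r}}$ is nonzero for $a\neq 0$ (assuming $\mu\neq 0$, the case $\mu=0$ being trivial), so the formal degree is genuinely $4$ and the bound is not vacuous.

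Finally I would invoke the parity constraint: by Lemma \ref{parity}, $r_a\equiv n\pmod 2$, and since $1\le r_a\le 4$ this forces $r_a\le 4$ when $n$ is even and $r_a\le 3$ when $n$ is odd. Substituting $2^{\frac{n+r_a}{2}}\le 2^{\frac{n+4}{2}}$ (respectively $\le 2^{\frac{n+3}{2}}$) into the second term of \eqref{Mesnager}, which originates from Theorem \ref{r-nonli-r-1}, and bounding the sum over the $2^n-1$ nonzero $a$, yields
\[
nl_2(g_\mu)\ge 2^{n-1}-\frac12\sqrt{2^n+(2^n-1)2^{\frac{n+4}{2}}}
\]
in the even case and the analogous inequality with $2^{\frac{n+3}{2}}$ in the odd case, which are exactly the claimed bounds.
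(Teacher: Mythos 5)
Your proof is correct. The paper only quotes this theorem from Gode--Gangopadhyay without reproving it, but your route --- realizing the radical of $\psi_a$ as the kernel of the linearized polynomial of the form \eqref{polarform}, raising by a Frobenius power so the exponents become $x^{2^{ri}}$ with indices in $\{0,1,3,4\}$ and nonzero top coefficient, invoking Lemma \ref{lemrootnum} (valid since $\gcd(n,r)=1$ and $4\le n$) to get $r_a\le 4$, and finishing with Lemma \ref{parity} and \eqref{Mesnager} --- is exactly the specialization of the machinery the paper itself uses to prove its generalization, Theorem \ref{maintheorem}.
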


Li, Hu and Gao \cite{8} have improved on Corollary \ref{cor6} for
general cubic Boolean functions, while for cubic monomial Boolean
functions the improved estimation are better than ones given in
Theorem \ref{Gode-Gango-F}:

\begin{theorem}\cite{8}\label{Li-Hu-Gao-F}
Let  $F_\mu=Tr(\sum_{l=1}^{m}\mu_l x^{d_l})$, where $\mu_l\in
\GF{n}$ and $d_l=2^{i_l+j_l+1}$, $n>i_l>j_l>0$. Let us suppose that
any derivative of $F_\mu$ be a quadratic function. Let
$h_u(x)=Tr(\sum_{i=1}^{n-1}c_{i,u}x^{2^i+1}), c_{i,u}\in \GF{n}$, be
the quadratic part of the derivative of $F_\mu$ at $u\in \GF{n}$.

Let $s=\min \{\,i\,|\,\exists u, c_{i,u}\neq 0, 1\leq i \leq
n-1\,\}$, $t=\max \{\,i\,|\,\exists u\in \GF{n}, c_{i,u}\neq 0,
1\leq i \leq n-1\,\}$ and $t_1=\max \{\,i\,|\,\exists u\in \GF{n},
c_{i,u}\neq 0, i\neq t\,\}$ if $s\neq t$ or $n\neq 2t$.

\textcircled{1} If $n<s+t$,
\begin{equation}
nl_2(F_\mu)\geq 2^{n-1}-\frac{1}{2}\sqrt{2^n+(2^n-1)2^t},
\end{equation}

\textcircled{2}  If  $2t>n\geq s+t$,
\begin{equation}
nl_2(F_\mu)\geq 2^{n-1}-\frac{1}{2}\sqrt{2^n+(2^n-1)2^{n-s}},
\end{equation}

\textcircled{3} If $n=2t$  and  $s\neq t$, let
$p=\min\{n-2s,2t_1\}$,
\begin{equation}
nl_2(F_\mu)\geq
2^{n-1}-\frac{1}{2}\sqrt{2^n+(2^n-1)2^{\frac{n+p}{2}}},
\end{equation}

\textcircled{4}    If   $n>2t$ is an even, let  $p=\min\{n-2s,2t\}$,
\begin{equation}
nl_2(F_\mu)\geq
2^{n-1}-\frac{1}{2}\sqrt{2^n+(2^n-1)2^{\frac{n+p}{2}}},
\end{equation}

If  $n>2t$ is an odd, let  $q=\min\{n-2s,2t-1\}$,
\begin{equation}
nl_2(F_\mu)\geq
2^{n-1}-\frac{1}{2}\sqrt{2^n+(2^n-1)2^{\frac{n+q}{2}}}.
\end{equation}
\end{theorem}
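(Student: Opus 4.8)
The plan is to reduce the whole statement to a single uniform upper bound on the dimension $r_a$ of the radical of the quadratic part of each derivative, and then to feed that bound into the machinery of Theorem \ref{r-nonli-r-1}. Since every derivative of $F_\mu$ is assumed quadratic, I may replace $F_\mu$ by $\widetilde{F_\mu}(x)=Tr(xQ(x))$ (same second-order nonlinearity), so that the quadratic form attached to $D_a\widetilde{F_\mu}$ has quadratic part exactly $h_a$. By the derived inequality \eqref{Mesnager} it suffices to control $\sum_{a\in\GF{n}^*}2^{(n+r_a)/2}$, and for this I only need, in each listed regime, one exponent $B$ with $r_a\le B$ for every $a$; then $\sum_{a}2^{(n+r_a)/2}\le(2^n-1)2^{(n+B)/2}$ reproduces each claimed radicand, the matching being $B=2t-n$ in case \textcircled{1}, $B=n-2s$ in case \textcircled{2}, and $B=p$ (resp. $q$) in cases \textcircled{3},\textcircled{4}.

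First I would compute the polar form of $h_a(x)=Tr(\sum_i c_{i,a}x^{2^i+1})$. Using the identity $(x+y)^{2^i+1}+x^{2^i+1}+y^{2^i+1}=x^{2^i}y+xy^{2^i}$ together with the adjointness $Tr(c\,xy^{2^i})=Tr(c^{2^{n-i}}x^{2^{n-i}}y)$ of the trace, the bilinear form becomes $B_a(x,y)=Tr\bigl(y\,L_a(x)\bigr)$ with
\[
L_a(x)=\sum_i\bigl(c_{i,a}x^{2^i}+c_{i,a}^{2^{n-i}}x^{2^{n-i}}\bigr).
\]
By nondegeneracy of the trace form the radical of $h_a$ is precisely $\ker L_a$, so $r_a=\dim_{\GF{}}\ker L_a$. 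The nonzero exponents of $L_a$, read as indices modulo $n$, lie inside $[s,t]\cup[n-t,n-s]$, where I use the global $s,t$ (the per-$a$ exponent set is contained in this, which is all I need for a uniform bound).

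The core of the argument is a root-number estimate for $L_a$. Because $x\mapsto x^{2^k}$ permutes $\GF{n}$, raising $L_a$ to a Frobenius power and reducing exponents modulo $n$ preserves $\ker L_a$ and merely rotates its exponent set cyclically; choosing the rotation that places the lowest exponent at $0$ and applying Lemma \ref{lemrootnum} (with its $s$ taken to be $1$) shows that $r_a$ is bounded by the length of the smallest cyclic window modulo $n$ containing all nonzero exponents of $L_a$. The two natural windows give the two competing quantities: the straight window $[s,n-s]$ yields $r_a\le n-2s$, while rotating the central gap $(t,n-t)$ to the outside (raising to $2^{t}$, legitimate exactly when $2t<n$) yields $r_a\le 2t$. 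Computing the largest gap in each regime then produces the stated bounds uniformly: the wrapped window gives $2t-n$ in \textcircled{1}, the single interval $[s,n-s]$ gives $n-2s$ in \textcircled{2}, and the minimum $\min\{2t,n-2s\}=p$ in the even case \textcircled{4}. For odd $n$ I would invoke Lemma \ref{parity}: since $r_a\equiv n\pmod 2$, an even window bound $2t$ is forced down to $2t-1$, which is exactly how $q=\min\{n-2s,2t-1\}$ arises.

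\textbf{I expect the main obstacle to be the boundary case \textcircled{3}, where $n=2t$.} There the two intervals $[s,t]$ and $[n-t,n-s]$ meet at the single central exponent $2^{t}=2^{n/2}$, whose coefficient in $L_a$ is the self-paired $\gamma_a:=c_{t,a}+c_{t,a}^{2^{t}}$, and one checks $\gamma_a^{2^t}=\gamma_a$, i.e. $\gamma_a\in\GF{t}$. The clean cyclic-rotation bound $2t$ now degenerates (it equals $n$), so the useful refinement must exploit the second-largest index: if the central term could be ignored, the two inner gaps $(t_1,t)$ and $(t,n-t_1)$ would merge into one long gap and the minimal window would collapse to spread $2t_1$. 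The delicate step is therefore to justify that the self-conjugate middle term cannot by itself enlarge the effective exponent window beyond $\max\{n-2s,2t_1\}$, yielding $r_a\le\min\{n-2s,2t_1\}=p$; this requires separating the contribution of $\gamma_a x^{2^t}$ from the outer exponents and is where the case-specific bookkeeping is heaviest. Once all four regimes supply their uniform exponent $B$, substitution into \eqref{Mesnager} closes the proof.
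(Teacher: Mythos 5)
First, note that this theorem is quoted from Li--Hu--Gao \cite{8}; the paper does not prove it, but in the proof of Theorem \ref{maintheorem} it indicates how all four cases drop out of Theorem \ref{corrootnum} by choosing explicit shift vectors $K$ — which is exactly the Frobenius-rotation/window idea you use. Your reduction to a uniform bound on $r_a$ via \eqref{Mesnager}, the identification of the radical with $\ker L_a$ for $L_a(x)=\sum_i(c_{i,a}x^{2^i}+c_{i,a}^{2^{n-i}}x^{2^{n-i}})$, and the window bounds $2t-n$, $n-2s$, $\min\{n-2s,2t\}$ obtained by rotating the exponent set and applying Lemma \ref{lemrootnum}, together with the parity correction from Lemma \ref{parity} for odd $n$, correctly establish cases \textcircled{1}, \textcircled{2} and \textcircled{4}.

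The genuine gap is case \textcircled{3}, and you have in effect conceded it rather than closed it. When $n=2t$ the exponent set is (contained in) $\{t\}\cup[s,t_1]\cup[n-t_1,n-s]$, and the pure cyclic-window argument only yields $r_a\le\min\{n-2s,\,t+t_1\}$ (the largest complementary gap is $\max\{2s,\,t-t_1\}$), which is strictly weaker than the claimed $p=\min\{n-2s,2t_1\}$ whenever $2t_1<n-2s$. Observing that the middle coefficient $\gamma_a=c_{t,a}+c_{t,a}^{2^t}$ lies in $\GF{t}$ is the right starting point, but the actual step — exploiting this self-conjugacy (e.g.\ by combining $L_a$ with $L_a^{2^t}$, or by an ad hoc analysis of the $\GF{t}$-semilinear structure) to show the middle term does not enlarge the effective window beyond $2t_1$ — is precisely what is missing, and it is not a bookkeeping matter: without it the bound you can certify in this regime is $\min\{n-2s,t+t_1\}$, not $p$. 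Note also that Theorem \ref{corrootnum} with $S_K=1$ cannot generically do better here either, so some argument specific to the self-paired exponent $2^{n/2}$ is unavoidable; until it is supplied, case \textcircled{3} is unproven.
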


 Li, Hu and Gao also generalized the
Gode-Gangopadhyay estimation for cubic monomial Boolean functions
$g_\mu$ (Theorem \ref{Gode-Gango-G}) to cubic Boolean functions
$G_\mu=Tr(\sum_{l=1}^{m}\mu_l x^{d_l})$, where $\mu_l\in \GF{n}$ and
$d_l=2^{i_lr+j_lr+1}$, $i_l>j_l>0$, $\gcd (n,r)=1, r \neq 1$.

\begin{theorem}\cite{8}\label{Li-Hu-Gao-G}
Let  $t=\max\{\,i_l\,|\,1\leq l \leq m\,\}$. Let us suppose that any
derivative of $G_\mu$ be quadratic function. For  $n\geq 2t$, if
$n$ is an even, then
\begin{equation}
nl_2(G_\mu)\geq
2^{n-1}-\frac{1}{2}\sqrt{2^n+(2^n-1)2^{\frac{n+2t}{2}}}.
\end{equation}
And if $n$  is an odd, then
\begin{equation}
nl_2(G_\mu)\geq
2^{n-1}-\frac{1}{2}\sqrt{2^n+(2^n-1)2^{\frac{n+2t-1}{2}}}.
\end{equation}
\end{theorem}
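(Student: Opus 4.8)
The plan is to combine the derivative bound of Theorem \ref{r-nonli-r-1} with a uniform upper bound on the radical dimensions of the (quadratic) derivatives of $G_\mu$. Applying Theorem \ref{r-nonli-r-1} to $G_\mu$ at second order yields
\[
nl_2(G_\mu)\geq 2^{n-1}-\frac{1}{2}\sqrt{2^{2n}-2\sum_{a\in\GF{n}}nl_1(D_aG_\mu)}.
\]
By hypothesis each $D_aG_\mu$ is quadratic, so Lemma \ref{kerdim} together with \eqref{Wal-nonli} gives $nl_1(D_aG_\mu)=2^{n-1}-2^{\frac{n+r_a}{2}-1}$ for $a\neq 0$ (and $nl_1(D_0G_\mu)=0$), where $r_a$ is the dimension of the radical of $D_aG_\mu$. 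Hence everything reduces to a bound of the shape $r_a\leq\rho$ valid for all $a\neq 0$: by monotonicity of the summand the sum is at least $(2^n-1)\big(2^{n-1}-2^{\frac{n+\rho}{2}-1}\big)$, and substituting this and simplifying collapses the radicand to $2^n+(2^n-1)2^{\frac{n+\rho}{2}}$. The two displayed inequalities are then exactly the cases $\rho=2t$ and $\rho=2t-1$.

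First I would produce the bound $r_a\leq 2t$. Writing $d_l=2^{i_lr}+2^{j_lr}+1$ and expanding $D_aG_\mu(x)=G_\mu(x+a)+G_\mu(x)$, the quadratic part in $x$ is a $\GF{}$-quadratic form each of whose monomials, after realigning exponents via $Tr(z)=Tr(z^{2^k})$, has the shape $x^{2^{kr}+1}$ with $k\in\{j_l,\,i_l,\,i_l-j_l\}$; in particular every exponent index is a multiple of $r$ and is bounded by $tr$. The polar form $B_a$ of this quadratic part can then be written as $B_a(x,y)=Tr\big(y\,\ell_a(x)\big)$ for a $\GF{}$-linearized map $\ell_a$ whose exponents are the powers $2^{\pm kr}$; clearing the negative powers by raising to the Frobenius power $2^{tr}$ (which does not change the kernel) turns $\ell_a$ into a genuine linearized polynomial $\sum_{k=0}^{2t}r_kx^{2^{rk}}$. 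Its kernel is exactly the radical $\varepsilon_{f,a}$ of $D_aG_\mu$, and its $2^r$-degree is at most $2t$, the condition $n\geq 2t$ guaranteeing that these exponents stay within range.

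Then I would invoke Lemma \ref{lemrootnum} with $s=r$, which is legitimate because $\gcd(n,r)=1$: the degree-$2t$ linearized polynomial above has at most $2^{2t}$ roots, so $r_a\leq 2t$ for every $a\neq 0$. Finally Lemma \ref{parity} forces $r_a\equiv n\pmod 2$; since $2t$ is even this gives $r_a\leq 2t$ when $n$ is even and $r_a\leq 2t-1$ when $n$ is odd. Feeding $\rho=2t$ and $\rho=2t-1$ into the reduction of the first paragraph produces the even and odd inequalities respectively.

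The main obstacle is the explicit passage in the second paragraph: one must expand the cubic derivative cleanly enough to see that its polar form is controlled by a single $2^r$-linearized polynomial, and in particular keep track of the mixed indices $(i_l-j_l)r$ and the conjugate powers $2^{-kr}$ so that after clearing negatives the highest Frobenius power is exactly $2^{2tr}$ and no stray term pushes the $2^r$-degree past $2t$. Once this structural claim and the parity constraint are secured, the remaining steps are a routine substitution and simplification.
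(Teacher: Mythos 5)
Your proof is correct: aligning every exponent of the polar form to a power of $2^r$, clearing the inverse Frobenius powers by raising to $2^{tr}$, and invoking Lemma \ref{lemrootnum} with $s=r$ (legitimate since $\gcd(n,r)=1$, and the hypothesis $n\ge 2t$ keeps the $2^r$-degree within the range the lemma requires) gives $r_a\le 2t$, Lemma \ref{parity} sharpens this to $2t-1$ for odd $n$, and substituting the resulting uniform bound into Theorem \ref{r-nonli-r-1} via Lemma \ref{kerdim} collapses the radicand to exactly the stated expressions. This is essentially the paper's own route: the theorem is cited from \cite{8} without an independent proof, but your argument is precisely the specialization of Theorem \ref{corrootnum} and Theorem \ref{maintheorem} (take the shift vector $K$ with $k=tr$ and all $k_j=0$, so that $S_K=r$ and $V_K=2t$) that the authors themselves sketch when comparing their general bound to the Li--Hu--Gao estimates.
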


Note that Theorem \ref{Li-Hu-Gao-G} restricted to $g_\mu$ coincides
with Theorem \ref{Gode-Gango-G} and (a generalization of) this is
reformulated as Corollary 5 in \cite{8}.
\section{On the root number of linearized polynomials}\label{Improved upper bound of root number on linearized polynomials}
In this section, we present an improvement of the upper bound on the root number of linearized polynomials as well as an algorithmic solution of Problem \ref{prob}.

\subsection{Improved upper bound on the root number of linearized polynomials}

To begin with, recall some simple facts which are found in
elementary number theory.
\begin{definition}
Let $p$ be a prime. The $p-$adic norm (or, also called $p-$adic
valuation) of a rational number $d=p^r\frac{B}{A}$, where $A, B \in
\mathbb{Z}$ and $\gcd(A,p)=\gcd(B,p)=1$, is denoted by $\|d\|_p$ and
defined by $\|d\|_p=p^{-r}$.
\end{definition}

\begin{definition}
We define a function $gg: \mathbb{Z}^*\times
\mathbb{Z}^*\longrightarrow \mathbb{Z}^*$ by
$gg(A,B)=\frac{1}{\prod_{p|B:\text{ prime}}\|A\|_p}$.
\end{definition}

\begin{proposition}\label{prop}
For any two nonzero integers $A$ and $B$, followings are facts.
\begin{enumerate}
\item $\gcd(A,B)|gg(a,B)$. In particular, $\gcd(A,B)\leq gg(a,B)$.
\item $gg(A,B)$
and $gg(B,A)$ have the same prime factors, and $\gcd(gg(A,B),
gg(B,A))=\gcd(A, B)$.
\item the value $\frac{A}{gg(A,B)}$ is
an integer and it holds
\[
\gcd(\frac{A}{gg(A,B)}, B)=1.
\]
In fact, $\frac{A}{gg(A,B)}$ is the greatest divisor of $A$ that is
coprime to $B$.
\item\label{propit4} If $A$ divides $A'$, then $\frac{A}{gg(A,B)}$ divides $\frac{A'}{gg(A',B)}$.
\end{enumerate}
\end{proposition}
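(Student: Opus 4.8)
The plan is to reduce every assertion to a comparison of $p$-adic valuations, one prime at a time. For a nonzero integer $A$ and a prime $p$, write $v_p(A)$ for the exponent of $p$ in the factorization of $A$, so that $\|A\|_p=p^{-v_p(A)}$. Unwinding the definition then gives the closed form
\[
gg(A,B)=\frac{1}{\prod_{p|B}\|A\|_p}=\prod_{p|B}p^{\,v_p(A)},
\]
the product running over the (finitely many) primes dividing $B$. In words, $gg(A,B)$ is the largest divisor of $A$ all of whose prime factors divide $B$ --- the ``$B$-part'' of $A$. Establishing this closed form is really the only step with any content; once it is in hand, each of the four items is essentially a one-line valuation check, so I expect no serious obstacle beyond keeping the case distinctions straight (item 2 being the fussiest).

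For item 1, I would note $v_p(\gcd(A,B))=\min(v_p(A),v_p(B))$, which is positive only when $p\mid B$; for such $p$ it is at most $v_p(A)=v_p(gg(A,B))$, while for $p\nmid B$ it vanishes and $v_p(gg(A,B))\ge 0$. Hence $v_p(\gcd(A,B))\le v_p(gg(A,B))$ for every prime, giving $\gcd(A,B)\mid gg(A,B)$, and since both quantities are positive the inequality $\gcd(A,B)\le gg(A,B)$ follows. For item 3, the valuation of $A/gg(A,B)$ at a prime $p$ equals $v_p(A)$ when $p\nmid B$ and $0$ when $p\mid B$; nonnegativity shows it is an integer, vanishing at all $p\mid B$ shows it is coprime to $B$, and its maximality shows it is the greatest divisor of $A$ coprime to $B$ (any such divisor satisfies $v_p\le v_p(A)$ for $p\nmid B$ and $v_p=0$ for $p\mid B$, hence divides $A/gg(A,B)$, which attains those caps). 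Item 4 follows at once from these valuations: $A\mid A'$ means $v_p(A)\le v_p(A')$ for all $p$, and since off the primes of $B$ the valuation of $A/gg(A,B)$ is $v_p(A)$ while on them it is $0$ --- and likewise for $A'$ --- we get $v_p\bigl(A/gg(A,B)\bigr)\le v_p\bigl(A'/gg(A',B)\bigr)$ for every prime, whence the claimed divisibility.

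Item 2 is the one warranting a short case analysis. A prime $p$ divides $gg(A,B)=\prod_{p|B}p^{v_p(A)}$ precisely when $p\mid B$ and $v_p(A)>0$, that is, exactly when $p$ is a common prime factor of $A$ and $B$; by symmetry the same set describes the prime factors of $gg(B,A)$, proving the first assertion. For the gcd, I would record $v_p(gg(A,B))=v_p(A)$ if $p\mid B$ (and $0$ otherwise) and $v_p(gg(B,A))=v_p(B)$ if $p\mid A$ (and $0$ otherwise), then take the minimum: when $p$ divides both $A$ and $B$ this minimum is $\min(v_p(A),v_p(B))$, and in every other case it is $0$; either way it equals $v_p(\gcd(A,B))$. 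Hence $\gcd(gg(A,B),gg(B,A))=\gcd(A,B)$, which completes the proposition.
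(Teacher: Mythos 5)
Your proof is correct and complete: the closed form $gg(A,B)=\prod_{p\mid B}p^{v_p(A)}$ follows directly from the paper's definition of $\|\cdot\|_p$, and each of the four items then reduces, exactly as you carry out, to a prime-by-prime comparison of valuations (including the identification of the prime factors of $gg(A,B)$ with those of $\gcd(A,B)$ in item 2). There is nothing to compare against on the paper's side: the proposition is stated there as a list of recalled ``simple facts'' from elementary number theory and no proof is given, so your valuation argument supplies the missing verification in the standard way. The only point worth flagging is cosmetic: since $A$ and $B$ are allowed to be negative while $gg(A,B)$ is always positive, the phrase ``greatest divisor of $A$ coprime to $B$'' in item 3 should be read up to sign (i.e.\ as $|A|/gg(A,B)$); this is an imprecision in the statement itself, not in your argument.
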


Then we are going to deduce an improved upper bound estimation on
numbers of roots of linearized polynomials.

\begin{lemma}\label{lemrooteq}
Let  $r_1<r_2$ be integers. Any linearized polynomial
$L(x)=\sum_{i=r_1}^{r_2}\alpha_ix^{2^i}(\alpha_i\in \GF{n})$ over
$\GF{n}$ has the same number of roots in $\GF{n}$ as
$L'(x)=\sum_{i=r_1}^{r_2}\alpha_i^{2^k}x^{2^{i+k+k_in}}$ has in
$\GF{n}$, where $k,k_i(i\in\overline{\{r_1,r_2\}})$ are arbitrarily
given integers.
\end{lemma}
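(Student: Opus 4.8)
The plan is to exhibit $L$ and $L'$ as two expressions that define the \emph{same} map on $\GF{n}$ up to post-composition with a Frobenius automorphism, so that their zero sets inside $\GF{n}$ literally coincide; equality of the root numbers then follows because "number of roots in $\GF{n}$" means the cardinality of the solution set. Two elementary facts drive everything: first, the Frobenius $\phi\colon y\mapsto y^2$ is a field automorphism of $\GF{n}$, hence $\phi^k\colon y\mapsto y^{2^k}$ is a bijection of $\GF{n}$ whose only zero is $0$; second, every $x\in\GF{n}$ satisfies $x^{2^n}=x$.

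First I would apply $\phi^k$ to the defining equation. Because $\phi$ is additive and multiplicative in characteristic $2$, one has $(L(x))^{2^k}=\sum_{i=r_1}^{r_2}\alpha_i^{2^k}x^{2^{i+k}}$; call this polynomial $M(x)$. Since $\phi^k$ is a bijection of $\GF{n}$ sending only $0$ to $0$, for every $x\in\GF{n}$ we have $L(x)=0$ if and only if $(L(x))^{2^k}=0$, i.e. if and only if $M(x)=0$. Thus $L$ and $M$ have exactly the same set of roots in $\GF{n}$, and in particular the same number of them.

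Next I would absorb the shifts $k_i n$ using $x^{2^n}=x$. For any $x\in\GF{n}$ and any integer $m$, repeated application of $x^{2^n}=x$ (and of its inverse, the identity, to cover negative multiples) gives $x^{2^{m+k_i n}}=x^{2^m}$; applying this with $m=i+k$ shows $x^{2^{i+k+k_i n}}=x^{2^{i+k}}$ for each $i$. Consequently $L'(x)=\sum_{i=r_1}^{r_2}\alpha_i^{2^k}x^{2^{i+k+k_i n}}$ and $M(x)$ agree for every $x\in\GF{n}$, so they share the same root set in $\GF{n}$. Chaining the two identifications yields that $L$ and $L'$ have the same number of roots in $\GF{n}$, as claimed.

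The only real subtlety, and the step I would state carefully, is the meaning of the exponent $2^{i+k+k_i n}$ when $i+k+k_i n$ is negative: there it is not a genuine polynomial exponent but should be read as the corresponding power of the Frobenius map on $\GF{n}$; equivalently, one may replace each $k_i$ by a nonnegative integer of the same value modulo the relevant range so as to keep all exponents nonnegative without altering the induced function. Once this convention is fixed, every manipulation above is an equality of functions on $\GF{n}$, so no root multiplicities or degree considerations enter and the argument remains purely set-theoretic.
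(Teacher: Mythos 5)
Your argument is exactly the paper's: raise the equation $L(x)=0$ to the $2^k$-th power (using that Frobenius is injective on $\GF{n}$) and then absorb the multiples $k_i n$ in the exponents via $x^{2^n}=x$, so the root sets of $L$ and $L'$ in $\GF{n}$ coincide. The proposal is correct and matches the paper's proof step for step; your extra remark on interpreting negative exponents as Frobenius powers is a sensible clarification the paper leaves implicit.
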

\begin{proof}
 $x\in \GF{n}$ is a root of  $L(x)$ $\Longleftrightarrow$
 $L(x)=0$
$\Longleftrightarrow$ $L(x)^{2^k}=0$ $\Longleftrightarrow$\\
$\sum_{i=r_1}^{r_2}\alpha_i^{2^k}x^{2^{i+k}}=0$
$\Longleftrightarrow$
$\sum_{i=r_1}^{r_2}\alpha_i^{2^k}x^{2^{i+k+k_in}}=0$ \\(Regarding to
$x^{2^{k_in}}=x$ which follows from  $x\in \GF{n}$)
\\$\Longleftrightarrow$ $x\in \GF{n}$  is a root of $L'(x)$.
\end{proof}

\begin{theorem}\label{corrootnum}
Let  $r_1<r_2$ be integers and
$L(x)=\sum_{i=r_1}^{r_2}\alpha_ix^{2^i}(\alpha_i\in \GF{n})$ be a
linearized polynomial over $\GF{n}$. Let us introduce following
notations: $\Delta=\{\,i\,|\,\alpha_i\neq 0, r_1\leq i \leq
r_2\,\}=\{i_0,i_1,\cdots,i_{t-1}\}$ and
$U=\{\,K=(k,k_0,k_1,\cdots,k_{t-1})\in \mathbb{Z}^{t+1}\,|\,\forall
j\in \overline{\{0,t-1\}}, i_j+k+k_jn\geq0\,\}$. For $K\in U$, let
us define following quantities sequentially:
$T_K=\gcd(\{i_j+k+k_jn|j\in\overline{\{0,t-1\}}\})$, $S_K=T_K/
gg(T_K,n)$,
$V_K=\max_{j\in\overline{\{0,t-1\}}}\{\frac{i_j+k+k_jn}{S_K}\}$ and
$V=\min_{K\in U}V_K$.

Then $L(x)$ has at most $2^{V}$ solutions in $\GF{n}$.
\end{theorem}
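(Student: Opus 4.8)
The plan is to reduce the given linearized polynomial $L$ to a special form to which the earlier counting lemma (Lemma \ref{lemrootnum}) applies, and then optimize the resulting bound over all admissible transformations. Lemma \ref{lemrooteq} tells me that for any choice of integers $K=(k,k_0,\ldots,k_{t-1})$ satisfying the nonnegativity constraints defining $U$, the polynomial $L'(x)=\sum_{j}\alpha_{i_j}^{2^k}x^{2^{i_j+k+k_jn}}$ has exactly the same number of roots in $\GF{n}$ as $L$. So it suffices to bound, for each $K\in U$, the number of roots of $L'$, and then take the minimum of these bounds over $K$.

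First I would fix an arbitrary $K\in U$ and set $e_j=i_j+k+k_jn\geq 0$ for each $j$, so that $L'(x)=\sum_{j=0}^{t-1}\alpha_{i_j}^{2^k}x^{2^{e_j}}$. Writing $T_K=\gcd(\{e_j\})$, every exponent is a multiple of $T_K$, say $e_j=T_K m_j$, and I can rewrite $L'(x)=\sum_j \beta_j\, x^{2^{T_K m_j}}$ with $\beta_j=\alpha_{i_j}^{2^k}$. To match the hypothesis of Lemma \ref{lemrootnum}, which requires exponents of the shape $2^{si}$ with $\gcd(n,s)=1$, I would factor $T_K$ into the part $S_K=T_K/gg(T_K,n)$ that is coprime to $n$ and the complementary factor $gg(T_K,n)$. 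By Proposition \ref{prop}(3), $S_K$ is a genuine integer divisor of $T_K$ with $\gcd(S_K,n)=1$. I then group each exponent as $T_K m_j = S_K\cdot\bigl(\tfrac{T_K}{S_K}m_j\bigr)$ and set $s=S_K$; thus $L'(x)=\sum_j \beta_j\, x^{2^{s\nu_j}}$ where $\nu_j=(T_K/S_K)m_j=e_j/S_K$ are nonnegative integers. Since $\gcd(n,S_K)=1$, Lemma \ref{lemrootnum} (equivalently the corollary of the Galois-kernel Lemma 1) applies with this $s$, and the number of roots is at most $2^{\max_j \nu_j}=2^{V_K}$, because $V_K=\max_j (e_j/S_K)=\max_j\nu_j$.

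Having shown $L'$ has at most $2^{V_K}$ roots in $\GF{n}$ for each individual $K\in U$, and having established via Lemma \ref{lemrooteq} that this count equals the number of roots of $L$ itself, I conclude that $L$ has at most $2^{V_K}$ roots for every admissible $K$. Since this holds for all $K\in U$ simultaneously, the best (smallest) such bound governs, giving at most $2^{V}$ roots with $V=\min_{K\in U}V_K$, as claimed. I would also note that $U$ is nonempty (e.g. take $k$ large and all $k_i=0$), so the minimum is well-defined, and that although $U$ is infinite, the relevant quantity $V_K$ depends only on the exponents modulo the lattice generated by $n$, so the minimum is actually attained.

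The main obstacle I anticipate is the careful bookkeeping in the middle step: ensuring that after dividing out by $S_K$ the exponents $e_j/S_K$ remain integers and that $S_K$ is the correct quantity coprime to $n$ to serve as the stride $s$ in Lemma \ref{lemrootnum}. The subtlety is that one cannot simply divide by the full gcd $T_K$, since $T_K$ may share factors with $n$, which would violate the coprimality requirement $\gcd(n,s)=1$; the role of the function $gg$ is precisely to strip off exactly the $n$-sharing part of $T_K$ while leaving an integer. Verifying that $S_K\mid e_j$ for every $j$ (so the reduced exponents are integral) follows from $S_K\mid T_K\mid e_j$, and checking $\gcd(n,S_K)=1$ is exactly Proposition \ref{prop}(3); getting these two facts to hold at once is the crux of why this bound improves on the naive application of Lemma \ref{lemrootnum}.
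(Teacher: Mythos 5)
Your proposal is correct and follows essentially the same route as the paper's own proof: transform $L$ to $L'$ via Lemma \ref{lemrooteq}, observe that every exponent $i_j+k+k_jn$ is a multiple of $S_K$ with $\gcd(S_K,n)=1$ by Proposition \ref{prop}, apply Lemma \ref{lemrootnum} with stride $s=S_K$ to get the bound $2^{V_K}$, and minimize over $K\in U$. Your added remarks on why one must divide by $S_K$ rather than the full $T_K$, and on the well-definedness of the minimum, are consistent with (and slightly more explicit than) the paper's argument.
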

\begin{proof} By Lemma \ref{lemrooteq}, we know that the number of  $\GF{n}$-roots of $L(x)$
equals to the number of  $\GF{n}$-roots of
$L'(x)=\sum_{i=r_1}^{r_2}\alpha_i^{2^k}x^{2^{i+k+k_in}}$ for any
$K=(k,k_0,\cdots,k_{t-1})\in U$. $L'(x)=\sum_{i\in
\Delta}\alpha_i^{2^k}x^{2^{S_K\cdot\frac{i+k+k_in}{S_K}}}=\sum_{l=0}^{V_K}\beta_lx^{2^{S_K\cdot
l}}$, where $\beta_l=\sum\alpha_i^{2^k}$ and the sum is over all
$i\in \Delta$ such that $l=\frac{i+k+k_in}{S_K}$. (If there no
exists such $i\in \Delta$, then we think $\beta_l=0$.) Since
$\gcd(S_K,n)=1$ by Proposition \ref{prop}, Lemma \ref{lemrootnum}
says that the number of $L'(x)$'s roots belonging to $\GF{n}$ is not
greater than $2^{V_K}$, so that the number of $L(x)$'s roots
belonging to $\GF{n}$ is not greater than $2^{V_K}$, from which the
theorem are validated.
\end{proof}

\subsection{Search for the Minimum $V$}\label{secmini}
In this subsection, we consider following problem.

\begin{problem}\label{prob} Given an integer $n$ and an integer set $\Delta=\{i_0,i_1,\cdots,i_{t-1}\}$,
where $n>i_0>i_1>\cdots>i_{t-1}$ be assumed, and let
$U=\{\,K=(k,k_0,k_1,\cdots,k_{t-1})\in \mathbb{Z}^{t+1}\,|\,\forall
j\in \overline{\{0,t-1\}}, i_j+k+k_jn\geq0\,\}$. For
$K=(k,k_0,k_1,\cdots,k_{t-1})\in U$, let us define
$T_K=\gcd(\{i_j+k+k_jn|j\in\overline{\{0,t-1\}}\})$, $S_K=T_K/
gg(T_K,n)$,
$V_K=\max_{j\in\overline{\{0,t-1\}}}\{\frac{i_j+k+k_jn}{S_K}\}$ and
$V=\min_{K\in U}V_K$. Find a $K$ such that $V_K=V$.
\end{problem}

Seemingly,  it looks like one has to scan the infinite space $U$ to
solve this problem. But, below we show that there exists a
polynomial-time algorithm to solve this problem.

To begin with, we have following useful fact:
\begin{proposition}\label{prop0}
For every $K=(k,k_0,\cdots,k_{t-1})$ attaining the minimum $V=V_K$
to be found in Problem \ref{prob},
\[
\min_{j\in\overline{\{0,t-1\}}}\{i_j+k+k_jn\}=0.
\]
\end{proposition}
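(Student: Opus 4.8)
The plan is to argue by contradiction. Suppose $K=(k,k_0,\dots,k_{t-1})\in U$ attains the minimum $V=V_K$, yet $\min_j\{i_j+k+k_jn\}>0$; I will then produce another admissible point $K'\in U$ with $V_{K'}<V_K$, which is absurd. Throughout write $a_j:=i_j+k+k_jn\geq 0$, $m:=\min_j a_j$ and $M:=\max_j a_j$, so that $V_K=M/S_K$. The only move I need is to lower the free parameter $k$: setting $k':=k-S_K$ while leaving every $k_j$ unchanged decreases all coordinates by the same amount, $a_j':=a_j-S_K$. This uniform shift is the natural candidate because it shrinks the maximum $M$ without disturbing the relative positions of the exponents.

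First I would record the elementary consequences of the assumption $m>0$. Since all $a_j\geq m>0$, the value $T_K=\gcd(a_0,\dots,a_{t-1})$ is a positive integer, and hence $S_K=T_K/gg(T_K,n)\geq 1$. Because $S_K\mid T_K\mid a_j$ for every $j$, in particular $S_K\mid m$; combined with $m>0$ this forces $m\geq S_K$. Consequently $a_j'=a_j-S_K\geq m-S_K\geq 0$ for all $j$, so $K'=(k-S_K,k_0,\dots,k_{t-1})$ indeed lies in $U$.

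The crucial step, and the only point that genuinely uses Proposition~\ref{prop} rather than bare arithmetic, is to show $S_{K'}\geq S_K$. Here I would note that $S_K$ divides every $a_j'=a_j-S_K$, whence $S_K\mid T_{K'}$; moreover $S_K=T_K/gg(T_K,n)$ is coprime to $n$ by the third item of Proposition~\ref{prop}. Thus $S_K$ is a divisor of $T_{K'}$ coprime to $n$, and since $S_{K'}$ is by definition the \emph{greatest} such divisor of $T_{K'}$, we get $S_K\leq S_{K'}$. This monotonicity of the ``coprime part'' of the gcd under the shift is the one non-obvious ingredient; everything surrounding it is bookkeeping.

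Finally I would assemble the estimate. Using $a_j'=a_j-S_K$ we have $V_{K'}=\max_j a_j'/S_{K'}=(M-S_K)/S_{K'}$, and $S_{K'}\geq S_K>0$ together with $M-S_K\geq 0$ gives
\[
V_{K'}=\frac{M-S_K}{S_{K'}}\leq\frac{M-S_K}{S_K}=\frac{M}{S_K}-1=V_K-1<V_K,
\]
contradicting the minimality of $V_K$. Hence no minimizer can have $m>0$, and since membership in $U$ already guarantees $m\geq 0$, every minimizer satisfies $\min_j\{i_j+k+k_jn\}=0$. I expect the sole delicate point to be the inequality $S_{K'}\geq S_K$; the admissibility $a_j'\geq 0$ and the final chain of inequalities are routine once $m\geq S_K$ has been secured.
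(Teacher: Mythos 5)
Your proof is correct and follows essentially the same strategy as the paper's: a contradiction obtained by decreasing $k$ while keeping every $k_j$ fixed, with Proposition \ref{prop} guaranteeing that $S$ cannot decrease under the shift, so the maximum strictly drops. The only difference is cosmetic --- the paper shifts by the full minimum value $-(i_0+k+k_0n)$ in one step (using item \ref{propit4} via $T_K\mid T_{K'}$) while you shift by $S_K$ (using the greatest-coprime-divisor characterization); both give $V_{K'}<V_K$.
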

\begin{proof}
Let us assume the opposition:
$\min_{j\in\overline{\{0,t-1\}}}\{i_j+k+k_jn\}\neq 0$ (i.e. $> 0$).
We can assume wlog that
$\min_{j\in\overline{\{0,t-1\}}}\{i_j+k+k_jn\}=i_0+k+k_0n$. Let us
set $k'=-i_0-k_0n$ and $K'=(k',k_0,\cdots,k_{t-1})$. Then, because
$i_j+k'+k_jn=i_j-i_0-k_0n+k_jn=(i_j+k+k_jn)-(i_0+k+k_0n)$ for every
$j\in\overline{\{0,t-1\}}$, it holds $T_K|T_{K'}$ and so $S_K\leq
S_{K'}$ by the item \ref{propit4} of Proposition \ref{prop}. Also,
since $i_j+k'+k_jn=(i_j+k+k_jn)-(i_0+k+k_0n)<(i_j+k+k_jn)$ for every
$j$, we get
\[
V_{K'}=\max_{j\in\overline{\{0,t-1\}}}\{\frac{i_j+k'+k_jn}{S_{K'}}\}<\max_{j\in\overline{\{0,t-1\}}}\{\frac{i_j+k+k_jn}{S_{K}}\}=V_K,
\]
which is a contradiction to the assumption that $K$ attains the
minimum $V=V_K$.
\end{proof}

On the other hand, since $K'=(k\mod n,
k_0+\lfloor\frac{k}{n}\rfloor, \cdots,
k_{t-1}+\lfloor\frac{k}{n}\rfloor)$ gives the same $T_K,S_K,V_K$ as
$K=(k,k_0,\cdots,k_{t-1})$ gives, i.e.
$T_{K'}=T_K,S_{K'}=S_K,V_{K'}=V_K$, though there are infinite number
of $K$'s such that $V_K=V$, we can restrict the range of $k$ into
the sub-opened interval $[0,n)$. Further specifically, by making use
of the assumption $n>i_0>i_1>\cdots>i_{t-1}$ and Proposition
\ref{prop0}, we can restrict the range of $k$ into the set
$k_S=\{(n-i_j)\mod n\}_{j\in\overline{\{0,t-1\}}}$.

Denote $V_0=(i_0-i_{t-1})\mod n$. Letting
$K_0=(-i_{t-1},-\lfloor\frac{i_0-i_{t-1}}{n}\rfloor,\cdots,-\lfloor\frac{i_{t-2}-i_{t-1}}{n}\rfloor,0)$,
we have $K_0\in U$ and $V_{K_0}\leq V_0$, and therefore it follows
\[
V\leq V_0<n.
\]
Let us introduce denotations $L_{j}=\frac{i_j+k+k_jn}{S_K}, 0\leq
j\leq t-1$ and $a=S_K^{-1}\mod n$ (This value exists because
$\gcd(S_K,n)=1$). It is true $L_{j}\mod n = a(i_j+k)\mod n$. Also,
we know that if $K$ is a solution to Problem \ref{prob}, then $0\leq
L_{j}\leq V_K=V<n, 0\leq j\leq t-1$, and therefore identically
\[
L_{j}=a(i_j+k)\mod n, 0\leq j\leq t-1.
\]

With all these information, we are reduced to explore all possible
$\phi (n)$ $a$'s, i.e. such as $\gcd(a,n)=1$, where $\phi$ is Euler
Phi-function.

\textbf{Algorithm searching for a $K$ attaining the minimum $V$}\\
1. $V\leftarrow (i_0-i_{t-1})\mod n$;\\
2. \verb"For" $index=0$ up to $t-1$;\\
3. $k\leftarrow(n-i_{index})\mod n$;\\
4. \qquad \verb"For" $a=1$ up to $n-1$; \\
5. \qquad\qquad Compute $d=\gcd(a,n)$;\\
6. \qquad\qquad \verb"If" $d=1$ \verb"Then";\\
7. \qquad\qquad\qquad \verb"For" $j=1$ up to $t-1$;\\
8. \qquad\qquad\qquad\qquad $L_j\leftarrow (a\times (k+i_j))\mod n$;\\
9. \qquad\qquad\qquad \verb"End For";\\
10. \qquad\qquad\qquad \verb"If" $V > \max_j L_j$ \verb"Then";\\
11. \qquad\qquad\qquad\qquad $V\leftarrow \max_j L_j$;\\
12. \qquad\qquad\qquad\qquad $a'\leftarrow a^{-1}\mod n$;\\
13. \qquad\qquad\qquad\qquad $K\leftarrow (k,\,
\frac{a'*L_0-k-i_0}{n}, \,\cdots,
\,\frac{a'*L_{t-1}-k-i_{t-1}}{n})$;\\
14. \qquad\qquad\qquad \verb"End If";\\
15. \qquad\qquad \verb"End If";\\
16. \qquad \verb"End For";\\
17. \verb"End For";\\
18. \verb"Output" $K$;

\section{Application to second order nonlinearity estimation of cubic Boolean functions}\label{Application}
Following Lemma describes lower bounds of the second-order
nonlinearities of cubic Boolean functions by the dimensions of root
sets of linearized polynomials.

\begin{lemma}\label{lemfunda}
Let $f$  be any cubic Boolean function. Define $Q_{f}:=\{a\in
\GF{n}|nl(D_af)\neq0\}$. Let us suppose that for every element $a
\in Q_{f}$, the dimension of the kernel of the derivative $D_af$
(or, equivalently, its quadratic part) of $f$ at $a$ is not greater
than $t$, where $t\geq 0$  is some fixed integer. Then
\[
nl_2(f)\geq
2^{n-1}-\frac{1}{2}\sqrt{2^{2n}-2|Q_{f}|(2^{n-1}-2^{\lfloor\frac{n+t}{2}\rfloor-1})}.
\]
\end{lemma}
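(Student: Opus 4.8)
The plan is to prove the stated lower bound on $nl_2(f)$ by invoking Theorem \ref{r-nonli-r-1} with $r=2$, which gives
\[
nl_2(f)\geq 2^{n-1}-\frac{1}{2}\sqrt{2^{2n}-2\sum_{a\in\GF{n}}nl_1(D_af)},
\]
and then bounding the sum $\sum_{a\in\GF{n}}nl(D_af)$ from below by a suitable expression. Since the square root is decreasing in the quantity subtracted, it suffices to produce a lower bound on $\sum_a nl(D_af)$; substituting it in yields the desired inequality.

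First I would split the sum over $a\in\GF{n}$ according to whether $a\in Q_f$ or not. For $a\notin Q_f$ we have $nl(D_af)=0$ by definition of $Q_f$, so these terms contribute nothing and the sum reduces to $\sum_{a\in Q_f}nl(D_af)$. Next, for each $a\in Q_f$ the derivative $D_af$ is (affinely equivalent to) a quadratic form, so by the formula derived in the preliminaries its nonlinearity equals $nl(D_af)=2^{n-1}-2^{\frac{n+r_a}{2}-1}$, where $r_a$ is the dimension of the kernel of $D_af$. The hypothesis that $r_a\leq t$ for every $a\in Q_f$, combined with the fact that $2^{\frac{n+r_a}{2}-1}$ is increasing in $r_a$, gives the termwise lower bound $nl(D_af)\geq 2^{n-1}-2^{\frac{n+t}{2}-1}$. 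One subtlety is the floor: since $r_a$ and $n$ have the same parity (Lemma \ref{parity}), $n+r_a$ is even and $\frac{n+r_a}{2}$ is an integer, while $\frac{n+t}{2}$ need not be; replacing $t$ by the effective bound $\lfloor\frac{n+t}{2}\rfloor$ in the exponent absorbs this parity consideration and yields $nl(D_af)\geq 2^{n-1}-2^{\lfloor\frac{n+t}{2}\rfloor-1}$.

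Summing this termwise bound over the $|Q_f|$ elements of $Q_f$ gives
\[
\sum_{a\in\GF{n}}nl(D_af)\geq |Q_f|\left(2^{n-1}-2^{\lfloor\frac{n+t}{2}\rfloor-1}\right),
\]
and substituting into the Carlet bound produces exactly the claimed estimate. The main obstacle I anticipate is purely bookkeeping rather than conceptual: namely handling the floor in the exponent correctly via the parity relation of Lemma \ref{parity}, and making sure the kernel dimension $r_a$ used in the nonlinearity formula genuinely coincides with the dimension of the kernel of $D_af$ referred to in the hypothesis (these agree because $D_af$ and its quadratic part $\psi_a$ share the same polar form and hence the same radical). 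Once these identifications are in place, the rest is the monotonicity argument and the direct substitution described above.
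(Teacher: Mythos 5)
Your argument is correct and is precisely the paper's intended proof: the paper simply states that the lemma is an immediate corollary of \eqref{Wal-nonli}, Lemma \ref{parity}, Lemma \ref{kerdim} and Theorem \ref{r-nonli-r-1}, and you have filled in exactly those steps (splitting the sum over $Q_f$, using $nl(D_af)=2^{n-1}-2^{\frac{n+r_a}{2}-1}$, and invoking the parity of $r_a$ and $n$ to justify the floor). No gaps.
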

\begin{proof}
This is an immediate corollary from \eqref{Wal-nonli}, Lemma
\ref{parity}, Lemma \ref{kerdim} and Theorem \ref{r-nonli-r-1}.
\end{proof}

Following theorem gives the most precise estimation for lower bound
of the second-order nonlinearity of any cubic Boolean function no
possessing affine derivatives, including the special form
$G_\mu=Tr(\sum_{l=1}^{m}\mu_l x^{d_l})$, where
$d_l=2^{i_l\gamma+j_l\gamma+1}$.

\begin{theorem}\label{maintheorem}
Let  $F_\mu=Tr(\sum_{l=1}^{m}\mu_l x^{d_l})$, where $\mu_l\in
\GF{n}^*$ and $d_l=2^{i_l+j_l+1}$, $i_l>j_l>0$, be any cubic Boolean
function. Define $Q_{F_\mu}:=\{a\in \GF{n}|nl(D_aF_\mu)\neq0\}$. Let
$\psi_a(x)=Tr(\sum_{i=1}^{n-1}c_{i,a}x^{2^i+1}), c_{i,a}\in \GF{n}$,
be the quadratic part of the derivative of $F_\mu$ at $a\in
Q_{F_\mu}$.

Let  $\Delta=\{\,i\,|\,\exists a\in Q_{F_\mu}, c_{i,a}\neq 0, 1\leq
i \leq n-1\,\}\cup \{\,-i\,|\,\exists a\in Q_{F_\mu}, c_{i,a}\neq 0,
1\leq i \leq n-1\,\}=\{i_0,i_1,\cdots,i_{t-1}\}$ and
$U=\{\,K=(k,k_0,k_1,\cdots,k_{t-1})\in \mathbb{Z}^{t+1}\,|\,\forall
j\in \overline{\{0,t-1\}}, i_j+k+k_jn\geq0\,\}$. For $K\in U$, let
us define following quantities sequentially:
$T_K=\gcd(\{i_j+k+k_jn|j\in\overline{\{0,t-1\}}\})$, $S_K=T_K/
gg(T_K,n)$,
$V_K=\max_{j\in\overline{\{0,t-1\}}}\{\frac{i_j+k+k_jn}{S_K}\}$ and
$V=\min_{K\in U}V_K$.

Then
\begin{equation}
nl_2(F_\mu)\geq
2^{n-1}-\frac{1}{2}\sqrt{2^{2n}-2|Q_{F_\mu}|(2^{n-1}-2^{\lfloor\frac{n+V}{2}\rfloor-1})},
\end{equation}
 and this estimation is
at least as much precise as ones in Theorem 9 and 10.

In particular, if $|Q_{F_\mu}|=2^n-1$, i.e. for every $a\in
\GF{n}^*$, $D_aF_\mu$ is not affine, then it holds
\begin{equation}
nl_2(F_\mu)\geq
2^{n-1}-\frac{1}{2}\sqrt{2^n+(2^n-1)2^{\lfloor\frac{n+V}{2}\rfloor}}.
\end{equation}
\end{theorem}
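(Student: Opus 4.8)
The plan is to deduce everything from Lemma~\ref{lemfunda} with the choice $t=V$. Concretely, it suffices to prove the single uniform estimate $r_a\le V$ for every $a\in Q_{F_\mu}$, where $r_a=\dim_{\GF{}}\varepsilon_{f,a}$ is the dimension of the radical of the quadratic part $\psi_a$. Once this is in hand, Lemma~\ref{lemfunda} returns the first displayed inequality verbatim, and the second follows by substituting $|Q_{F_\mu}|=2^n-1$ and simplifying, since $2^{2n}-2(2^n-1)\bigl(2^{n-1}-2^{\lfloor(n+V)/2\rfloor-1}\bigr)=2^n+(2^n-1)2^{\lfloor(n+V)/2\rfloor}$.

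First I would realize the radical $\varepsilon_{f,a}$ as the kernel of an explicit linearized polynomial. Computing the polar form of $\psi_a$ through $(x+y)^{2^i+1}+x^{2^i+1}+y^{2^i+1}=x^{2^i}y+xy^{2^i}$ and then shifting the Frobenius off $y$ by trace-invariance, $Tr\bigl(c_{i,a}xy^{2^i}\bigr)=Tr\bigl(c_{i,a}^{2^{n-i}}x^{2^{n-i}}y\bigr)$, yields $B_a(x,y)=Tr\bigl(y\,\Lambda_a(x)\bigr)$ with
\[
\Lambda_a(x)=\sum_{i:\,c_{i,a}\neq0}\bigl(c_{i,a}x^{2^i}+c_{i,a}^{2^{n-i}}x^{2^{n-i}}\bigr).
\]
Nondegeneracy of the trace pairing gives $\varepsilon_{f,a}=\ker\Lambda_a$, so $\Lambda_a$ has exactly $2^{r_a}$ roots in $\GF{n}$. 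Reducing the Frobenius exponent $n-i$ to $-i$ (legitimate by Lemma~\ref{lemrooteq}), the exponent set of $\Lambda_a$ is contained in $\Delta$ for every $a\in Q_{F_\mu}$, which is exactly why $\Delta$ is defined as $\{i\}\cup\{-i\}$.

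The heart of the argument is to bound $r_a$ by $V$ simultaneously for all $a$. The natural temptation is to run Theorem~\ref{corrootnum} separately on each $\Lambda_a$, but its support $\Delta_a\subseteq\Delta$ varies with $a$; instead I fix one $K=(k,k_0,\dots,k_{t-1})\in U$ attaining $V=V_K$ for the \emph{full} set $\Delta$ and use this same $K$ for every $a$. Applying Lemma~\ref{lemrooteq} with this $K$ transforms $\Lambda_a$ into $\sum_{i\in\Delta_a}c_{i,a}^{2^k}x^{2^{\,i+k+k_in}}$. Since $T_K$ divides $i+k+k_in$ for all $i\in\Delta$ and $S_K=T_K/gg(T_K,n)$ satisfies $\gcd(S_K,n)=1$ by Proposition~\ref{prop}, every surviving exponent is a nonnegative multiple $S_K\cdot\frac{i+k+k_in}{S_K}$ of $S_K$ with $\frac{i+k+k_in}{S_K}\le V_K=V$. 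Thus the transformed polynomial has the form $\sum_{l=0}^{V}\beta_lx^{2^{S_K l}}$ to which Lemma~\ref{lemrootnum} applies, giving at most $2^V$ roots. The decisive observation is that replacing $\Delta$ by the actual support $\Delta_a$ only deletes terms and never raises the top index, so the single $K$ minimizing $V_K$ over $\Delta$ certifies $r_a\le V$ for \emph{every} $a$ at once, bypassing any need to compare per-support optima against $V$; feeding $t=V$ into Lemma~\ref{lemfunda} then closes both inequalities.

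For the comparison with Theorems~9 and~10, I would argue structurally rather than numerically: each of those estimates is the instance of the present bound produced by one particular, non-minimizing transformation $K_0\in U$ (for example a trivial or single gcd-normalization), so its exponent equals $V_{K_0}$. Because $V=\min_{K\in U}V_K$ optimizes over all admissible $K$, we have $V\le V_{K_0}$, whence $2^{\lfloor(n+V)/2\rfloor}$ is no larger than the corresponding term under their square roots and our estimate is at least as precise. I expect this comparison to be the main nuisance, since reproducing each earlier bound means exhibiting its specific $K_0$ and, for case-split statements such as that of Li, Hu and Gao~\cite{8}, checking the cases one by one; the genuinely delicate \emph{new} step, however, is the uniform-in-$a$ root bound above, where the clean route is to commit to a single $K$ governing all supports $\Delta_a\subseteq\Delta$ rather than optimizing the transformation derivative by derivative.
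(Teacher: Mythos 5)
Your proposal follows essentially the same route as the paper: realize the radical of $\psi_a$ as the kernel of the linearized polynomial \eqref{polarform}, bound its root number by $2^V$ via Theorem \ref{corrootnum} (your explicit handling of the uniformity over the varying supports $\Delta_a\subseteq\Delta$ by committing to a single minimizing $K$ is a point the paper leaves implicit), and feed $t=V$ into Lemma \ref{lemfunda}. The comparison with Theorems \ref{Li-Hu-Gao-F} and \ref{Li-Hu-Gao-G} is only sketched in your write-up, but the strategy you describe --- exhibiting a specific non-minimizing $K_0\in U$ whose $V_{K_0}$ recovers each earlier exponent --- is exactly what the paper does with its vectors $K_1$ and $K_2$.
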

\begin{proof}

From Lemma \ref{lemfunda}, one can see that a lower bound of
second-order nonlinearity of $F_\mu$  is obtained from a upper bound
for dimension of kernel of
$\psi_a(x)=Tr(\sum_{i=1}^{n-1}c_{i,a}x^{2^i+1})$, the quadratic part
of the derivative $D_aF_\mu$. The kernel $\varepsilon_{F_\mu,a}$ of
$\psi_a(x)$ is given as the set of $x\in \GF{n}$ such that for any
$y\in \GF{n}$
$B_a(x,y)=\psi_a(x)+\psi_a(y)+\psi_a(x+y)=Tr(y\sum_{i=1}^{n-1}(c_{i,a}x^{2^i}+(c_{i,a}x)^{2^{-i}})=0$
  , i.e. the root set of linearized polynomial
\begin{equation}\label{polarform}
\sum_{i=1}^{n-1}(c_{i,a}x^{2^i}+(c_{i,a}x)^{2^{-i}}).
\end{equation}
   Applications of  Theorem \ref{corrootnum} and Lemma \ref{lemfunda} give the main assertion of the theorem.

Let us compare the lower bound estimation given in Theorem 11 with
ones of Li, Hu and Gao. First remark that by the Note we made in
Section 2 we can suppose  $t\leq \lfloor\frac{n}{2}\rfloor$  and
therefore the cases \textcircled{1} and \textcircled{2} of Theorem
\ref{Li-Hu-Gao-F} can be excluded from consideration. The Li-Hu-Gao
estimation is obtained as a special case of our discussion: Let
$t=\max\{\,i\in\Delta\,|\,i>0\,\}$,
$s=\min\{\,i\in\Delta\,|\,i>0\,\}$, $t_1=\min\{\,i\in\Delta\,|\,i>0,
i\neq t\,\}$, using $\Delta$ introduced by us. Taking two integer
vectors $K_1=\{t,0,\cdots,0\}$ ( $|\Delta|$ 0's ),
$K_2=\{-s,0,\cdots,0,1,\cdots,1\}$ ($\frac{|\Delta|}{2}$ 1's and
$\frac{|\Delta|}{2}$ 0's) for \textcircled{3} (case $n>2t$) of
Theorem \ref{Li-Hu-Gao-F} and taking $K_1=\{t,0,\cdots,0\}$(
$|\Delta|$ 0's ),
$K_2=\{-s,0,\cdots,0,-1,0,\cdots,0,1,\cdots,1\}$(the numbers of 0's
and 1's are $\frac{|\Delta|}{2}-1$, $\frac{|\Delta|}{2}$,
respectively and the place number of -1's is  $k_t$) for
\textcircled{4} (case $n=2t$) of Theorem \ref{Li-Hu-Gao-F}, then
letting $V_0=\min\{V_1,V_2\}$, give
\[
nl_2(F_\mu)\geq
2^{n-1}-\frac{1}{2}\sqrt{2^n+(2^n-1)2^{\lfloor\frac{n+V_0}{2}\rfloor}}.
\]
Obviously $V_0\geq V$, therefore our estimation would be at least as
much precise as ones given by Li-Hu-Gao.  Comparison with Theorem
\ref{Li-Hu-Gao-G} is also similar.
\end{proof}

 Finally, we note that  an assumption $c_{i,a}=0$ when $i>\lfloor\frac{n}{2}\rfloor$
can be made in the formulation of Theorem \ref{maintheorem}.

\section{Examples and comparisons}\label{Example}
As shown in below examples, for almost all cases, our estimation
would be more precise than ones of Li, Hu and Gao \cite{8}.

\begin{example}(Example 1 of \cite{8})
  Let $F_{\mu}=f_\mu=Tr(\mu x^{2^i+2^j+1})$. For every  $u\in \GF{n}^*$,
  the quadratic part of the derivative of $F_{\mu}$ is represented as
   $h_u(x)=Tr(\lambda_u^{2^{n-j}}x^{2^{i-j}+1}+\lambda_u x^{2^i+1}+\lambda_u x^{2^j+1})$
   for some $\lambda_u\in \GF{n}^*$.

\begin{enumerate}

\item $n=20, i=9, j=5$\\
Theorem \ref{Gode-Gango-F} says
\[
nl_2(f_{\mu})\geq
2^{19}-\frac{1}{2}\sqrt{2^{20}+(2^{20}-1)2^{19}}\approx 153561,
\]
 and Theorem \ref{Li-Hu-Gao-F}
says (in this case  $s=i-j=4, t=i=9$, and since $n>2t$ is an even,
we can set $p=\min\{12,18\}=12$ by \textcircled{4} of Theorem
\ref{Li-Hu-Gao-F})
\[
nl_2(f_\mu)\geq
2^{19}-\frac{1}{2}\sqrt{2^{20}+(2^{20}-1)2^{16}}\approx 393216.
\]

Now we will apply Theorem \ref{maintheorem} to this case. By
definition, $\Delta=\{i,j,i-j,-i,-j,j-i\}=\{9,5,4,-9,-5,-4\}$. For
$K=\{-5,2,0,5,4,6,1\}$,
$T_K=\gcd(9-5+40,5-5,4-5+100,-9-5+80,-5-5+120,-4-5+20)=\gcd(44,0,99,66,110,11)=11$,
$S_K=T_K=11$. Thus  $V\leq V_K=\max\{4,0,9,6,10,1\}=10$ and by
Theorem \ref{maintheorem} we have
\[
nl_2(f_\mu)\geq
2^{19}-\frac{1}{2}\sqrt{2^{20}+(2^{20}-1)2^{15}}\approx 431605.
\]

\item $n=19, i=9, j=5$

Theorem \ref{Gode-Gango-F} asserts
 \[
nl_2(f_\mu)\geq
2^{18}-\frac{1}{2}\sqrt{2^{19}+(2^{19}-1)2^{18}}\approx 76781.
\]
Theorem \ref{Li-Hu-Gao-F} gives (in this case  $s=i-j=4,t=i=9$ and
since $n>2t$  is an odd, we can set $q=\min\{11,18\}=11$ by
\textcircled{4} of Theorem \ref{Li-Hu-Gao-F})
 \[
nl_2(f_\mu)\geq
2^{18}-\frac{1}{2}\sqrt{2^{19}+(2^{19}-1)2^{15}}\approx 196608.
\]

On the other hand, the application of our Theorem \ref{maintheorem}
can improve these estimations as follows. By definition,
$\Delta=\{i,j,i-j,-i,-j,j-i\}=\{9,5,4,-9,-5,-4\}$. For
$K=\{-4,0,1,0,2,1,2\}$,
$T_K=\gcd(9-4,5-4+19,4-4,-9-4+38,-5-4+19,-4-4+38)=\gcd(5,20,0,25,10,30)=5$,
$S_K=T_K=5$. Thus  $V\leq V_K=\max\{1,4,0,5,2,6\}=6$ and Theorem
\ref{maintheorem} shows
\[
nl_2(f_\mu)\geq
2^{18}-\frac{1}{2}\sqrt{2^{19}+(2^{19}-1)2^{12}}\approx 238971.
\]
\end{enumerate}
\end{example}

The lower bound given by Theorem \ref{maintheorem} also improves the
Li-Hu-Gao estimation (Theorem \ref{Li-Hu-Gao-G}) for Boolean
functions $G_\mu$.

\begin{example}
Let $G_\mu(x)=Tr(\mu x^{2^{i\gamma}+2^{j\gamma}+1})$. The quadratic
part of the derivative of $G_\mu$ at  $u\in \GF{n}^*$ is represented
as
$h_u(x)=Tr(\lambda_u^{2^{n-j\gamma}}x^{2^{i\gamma-j\gamma}+1}+\lambda_u
x^{2^{i\gamma}+1}+\lambda_u x^{2^{j\gamma}+1})$ for some
$\lambda_u\in \GF{n}^*$.

\begin{enumerate}
\item $n=20,i=9,j=5,\gamma=2$.

Since $n\neq (i+j)\gamma, n\neq(2i-j)\gamma$, by Theorem 2 of
\cite{7} $G_\mu$ has no affine derivative. Due to $n>2i$, by Theorem
\ref{Li-Hu-Gao-G} we have

\[
nl_2(G_\mu)\geq
2^{19}-\frac{1}{2}\sqrt{2^{20}+(2^{20}-1)2^{19}}\approx 153561.
\]

At this time, let us use Theorem \ref{maintheorem} to estimate
$nl_2(G_\mu)$. By definition,
$\Delta=\{2i,2j,2i-2j,-2i,-2j,2j-2i\}=\{18,10,8,-18,-10,-8\}$. For
$K=\{8,-1,0,1,2,1,0\}$,
$T_K=\gcd(18+8-20,10+8,8+8+20,-18+8+40,-10+8+20,-8+8)=\gcd(6,18,36,30,18,0)=6$,
$gg(T_K,n)=2$, $S_K=T_K/2=3$. Thus $V\leq
V_K=\max\{2,6,12,10,6,0\}=12$ and Theorem \ref{maintheorem} gives an
improved estimation
\[
nl_2(G_\mu)\geq
2^{19}-\frac{1}{2}\sqrt{2^{20}+(2^{20}-1)2^{16}}\approx 393216.
\]

\item $n=19,i=9,j=5,\gamma=2$.

Since $n\neq (i+j)\gamma, n\neq(2i-j)\gamma$,  $G_\mu$ has no affine
derivative. Due to $n>2i$, Theorem \ref{Li-Hu-Gao-G} says

\[
nl_2(G_\mu)\geq
2^{18}-\frac{1}{2}\sqrt{2^{19}+(2^{19}-1)2^{18}}\approx 76781.
\]

Next, we will estimate  $nl_2(G_\mu)$ by using Theorem
\ref{maintheorem}. By definition,
$\Delta=\{2i,2j,2i-2j,-2i,-2j,2j-2i\}=\{18,10,8,-18,-10,-8\}$. For
$K=\{8,1,0,2,1,2,0\}$,
$T_K=\gcd(18+8+19,10+8,8+8+38,-18+8+19,-10+8+38,-8+8)=\gcd(45,18,54,9,36,0)=9$,
$S_K=T_K=9$. Thus $V\leq V_K=\max\{5,2,6,1,4,0\}=6$ and Theorem
\ref{maintheorem} proves the improved estimation
\[
nl_2(G_\mu)\geq
2^{18}-\frac{1}{2}\sqrt{2^{19}+(2^{19}-1)2^{12}}\approx 238971.
\]
\end{enumerate}
\end{example}

\begin{example}
For  $f_\mu$, the case of  $n=i+j,n\neq 2i-j$ is treated as
Corollary 4 in \cite{8}. Apply Theorem \ref{maintheorem} to this
case: $\Delta=\{i,j,i-j,-i,-j,j-i\}$.
 For  $K=\{2j,-1,0,-1,1,0,1\}$,  $T_K=j, S_K=j/ gg(j,n)$.
Thus  $V\leq V_K=4gg(j,n)$ and Theorem \ref{maintheorem} indicates
\begin{equation}\label{i+j}
nl_2(f_\mu)\geq
2^{n-1}-\frac{1}{2}\sqrt{2^n++(2^n-1)2^{\lfloor\frac{n+4gg(j,n)}{2}\rfloor}}.
\end{equation}
And in particular, if $\gcd(j,n)=1$ (so $gg(j,n)=1$), then
\[
2^{n-1}-\frac{1}{2}\sqrt{2^n+(2^n-1)2^{\lfloor\frac{n+4}{2}\rfloor}}.
\]
This lower bound is better than ones (with complicated
representations) given by Corollary 4 of \cite{8}. In fact, since
\[
 f_\mu=g_{\mu^p},
\]
this is not other than Corollary 5 of \cite{8} applied to $g_\mu$,
or, Theorem \ref{Gode-Gango-G}. How to improve this lower bound is
discussed in Section 7.
\end{example}

The exact values for the maximum second-order nonlinearity that a
$n-$variable Boolean function can achieve (i.e. the covering radius
of $RM(2,n)$) are known only for $3\leq n\leq 6$ \cite{14}; its
value is 1, 2, 6 an 18 respectively. It is conjectured in \cite{15}
that the exact value of the maximum second-order nonlinearity is
attained by a coset of $RM(2,n)$ in $RM(3,n)$ (i.e. by a cubic
function). Following examples also confirm this conjecture.
\begin{example}\label{exam1}
For the modified-Welch Boolean function $f_{welch'}=Tr(x^{2^t+3})$,
$t=\frac{n+1}{2}$, $n$ odd, Carlet's lower bound (Proposition 5 of
\cite{5}) states
\[
nl_2(f_{welch'})\geq
2^{n-1}-\frac{1}{2}\sqrt{2^n+(2^n-1)2^{\frac{n+3}{2}}}.
\]

 For odd $n>1$ (i.e. $n=3$) smaller than 5, this lower
bound becomes zero (the approximation also becomes equality) and
therefore non-meaningful.

But Theorem \ref{maintheorem} gives a meaningful lower bound as
follows: We have
$D_af_{welch'}(x)=Tr(ax^{2^t+2}+a^2x^{2^t+1}+a^{2^t}x^3)+l(x)=Tr(a^4x^3)+l(x)$
where $l$ is affine. Therefore $\Delta=\{1,-1\}$. Take
$K=\{1,0,0\}$. Then $V_K=1$. In fact, the kernel of the quadratic
Boolean function $Tr(a^4x^3)$ is $\{0, a\}$ when $a\neq 0$, and
therefore has the exact dimension 1. Hence for $n=3$ we have
\[
nl_2(f_{welch'})\geq
2^{n-1}-\frac{1}{2}\sqrt{2^n+(2^n-1)2^{\frac{n+1}{2}}}=1,
\]
that is, $nl_2(f_{welch'})=1$ over $\GF{3}$.
\end{example}

\begin{example}\label{exam2}
For $n=4$, consider the function $f=Tr(x^{2^3+2^2+1})$. Note
$f=Tr((x^{2^3+2^2+1})^4)=Tr(x^{2^2+2+1})$. At $a\in \GF{4}$, it has
derivative
$D_af=Tr(ax^{2^3+2^2}+a^{2^2}x^{2^3+1}+a^{2^3}x^{2^2+1})=Tr((a^2+a^{2^2})x^{2^3+1}+a^{2^3}x^{2^2+1})$.
If $a=0$ or $a=1$, then $D_af=0$ and $Q_f=\GF{4}\setminus \{0,1\}$.
For $a\neq 0,1$, We have $\Delta=\{3,2,-2,-3\}$, and taking
$K=\{3,-1,-1, 0, 0\}$, we get $V\leq V_K=2$. Following discussion
shows really $V=2$: The kernel $\varepsilon_{f,a}$ of $D_af$ is the
null space of
\begin{align*}
&(a^2+a^{2^2})x^{2^3}+((a^2+a^{2^2})x)^{2^{-3}}+a^{2^3}x^{2^2}+(a^{2^3}x)^{2^{-2}}\\
&=(a+a^2)^2x^8+(a+a^2)^4x^2+(a^8+a^2)x^4\\
&=[(a+a^2)x^4+(a^4+a)x^2+(a^4+a^2)x]^2\\
&=[(a+a^2)(x^2+x)^2+(a^4+a^2)(x^2+x)]^2\\
&=(a+a^2)^2(x^2+x)^2(x^2+x+a^2+a)^2,
\end{align*}
i.e. $\varepsilon_{f,a}=\{0,1,a,1+a\}$ and $V=r_a=2$.

By using Theorem \ref{maintheorem}, we have
\[
nl_2(f)\geq
2^{n-1}-\frac{1}{2}\sqrt{2^{2n}-2(2^n-2)(2^{n-1}-2^{\frac{n}{2}})}=2^3-\frac{1}{2}\sqrt{2^8-2\times14\times4}=2,
\]
that is, $nl_2(f)=2$ over $\GF{4}$.

\end{example}

\begin{example}\label{manypaper}The second-order nonlinearity of $g_\lambda=Tr(\lambda x^{2^{2r}+2^r+1})$ over $\GF{n}$ with $n=sr$
has been studied for $s=3,4,5,6$ by independent papers:
\begin{enumerate}
\item Singh \cite{17} discussed the case $s=3$.  Li-Hu-Gao \cite{8} also discussed this case
(Corollary 3 of \cite{8}).
\item Sun and Wu \cite{16} discussed the case $s=4$.
\item Gangopadhyay and Garg \cite{11} discussed the case $s=5$.
\item Gangopadhyay, Sarkar and Telang \cite{6} discussed the case $s=6$.
\end{enumerate}

The lower bounds proved by all these works can be shown or even
improved by corollaries of Theorem \ref{maintheorem}: Remind
$\Delta=\{2r,r,-r,-2r\}$.
\begin{enumerate}
\item For $n=3r$, by taking $K=\{2r, -1, -1, 0, 0\}$, $V\leq V_K=r$.
\begin{equation}
nl_2(g_\lambda)\geq 2^{n-1}-\frac{1}{2}\sqrt{2^n+(2^n-1)2^{2r}}.
\end{equation}
\item For $n=4r$, by taking $K=\{3r, -1, -1, 0, 0\}$, $V\leq V_K=2r$.
\begin{equation}
nl_2(g_\lambda)\geq 2^{n-1}-\frac{1}{2}\sqrt{2^n+(2^n-1)2^{3r}}.
\end{equation}
\item For $n=5r$, by taking $K=\{4r, -1, -1, 0, 0\}$, $V\leq V_K=3r$.
\begin{equation}
nl_2(g_\lambda)\geq 2^{n-1}-\frac{1}{2}\sqrt{2^n+(2^n-1)2^{4r}}.
\end{equation}
\item For $n=6r$, by taking $K=\{2r, 0, 0, 0, 0\}$, $V\leq V_K=4r$.
\begin{equation}
nl_2(g_\lambda)\geq 2^{n-1}-\frac{1}{2}\sqrt{2^n+(2^n-1)2^{5r}}.
\end{equation}
\end{enumerate}
\end{example}

 Furthermore, while for $s\geq 8$ the minimum $V$ search program gives only $V\leq
4r$ which is trivial, for $n=7r$ a better result is shown: One can
choose an integer $k$ such that $\gcd(n, 7k+4)=1$. Then, by taking
$K=\{6r, 2k, -1, 3k+1, k\}$ we have $V\leq
 V_K=\max\{(14k+8)r/(7k+4), 0, (21k+12)r/(7k+4), (7k+4)r/(7k+4)\}=3r$ and thus
 a novel result:
\begin{corollary}\label{manypaper_new}
If $n=7r$, then
\begin{equation}
nl_2(g_\lambda)\geq 2^{n-1}-\frac{1}{2}\sqrt{2^n+(2^n-1)2^{5r}}.
\end{equation}
\end{corollary}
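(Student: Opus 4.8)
The plan is to obtain the bound as a direct application of Theorem \ref{maintheorem} to $g_\lambda = Tr(\lambda x^{2^{2r}+2^r+1})$, for which the relevant index set is $\Delta = \{2r, r, -r, -2r\}$ (as recorded in Example \ref{manypaper}). Since $\lfloor \frac{n+V}{2}\rfloor = \lfloor \frac{7r + V}{2}\rfloor$, to reach the exponent $5r$ in the radicand it suffices to exhibit a single $K \in U$ with $V_K \leq 3r$; the minimum $V$ can then only be smaller, and Theorem \ref{maintheorem} (together with the fact that $g_\lambda$ has no nonzero affine derivative, so $|Q_{g_\lambda}| = 2^n - 1$) yields the claimed inequality. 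Thus the whole content is the construction of a good $K$.

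First I would secure the arithmetic input: choose an integer $k \geq 0$ with $\gcd(n, 7k+4) = 1$. Such $k$ exists because $7k+4 \equiv 4 \pmod 7$ is automatically prime to $7$, while for each prime $p \mid r$ with $p \neq 7$ the congruence $7k + 4 \equiv 0 \pmod p$ excludes only one residue class of $k$ modulo $p$ (as $7$ is invertible mod $p$); by the Chinese Remainder Theorem these finitely many exclusions leave infinitely many admissible $k$, in particular nonnegative ones. With such $k$ fixed, set $K = \{6r, 2k, -1, 3k+1, k\}$, i.e.\ global shift $6r$ and $(k_0, k_1, k_2, k_3) = (2k, -1, 3k+1, k)$ against $(i_0, i_1, i_2, i_3) = (2r, r, -r, -2r)$.

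The verification then proceeds by direct substitution. The shifted exponents $i_j + 6r + k_j n$ come out to be $(14k+8)r,\ 0,\ (21k+12)r,\ (7k+4)r$; all are nonnegative for $k \geq 0$, so $K \in U$. The key simplification is that $14k+8 = 2(7k+4)$ and $21k+12 = 3(7k+4)$, whence
\[
T_K = \gcd\bigl((14k+8)r,\,0,\,(21k+12)r,\,(7k+4)r\bigr) = r(7k+4).
\]
Now the coprimality of the construction pays off: since $\gcd(n, 7k+4)=1$ while every prime dividing $r$ also divides $n = 7r$, Proposition \ref{prop} gives that $S_K = T_K / gg(T_K, n)$, the greatest divisor of $r(7k+4)$ coprime to $n$, equals exactly $7k+4$. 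Dividing the shifted exponents by $S_K = 7k+4$ leaves $\{2r, 0, 3r, r\}$, so $V_K = 3r$ and hence $V \leq 3r$. Substituting $V = 3r$, $n = 7r$ and $\lfloor\frac{7r+3r}{2}\rfloor = 5r$ into Theorem \ref{maintheorem} completes the argument.

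The genuinely hard part is not any single calculation but the discovery of this $K$: one must engineer the $k_j$ so that $T_K$ acquires a factor $7k+4$ that the $gg$-operation is forced to keep (because it is coprime to $n$), while the three nonzero shifted exponents become the small multiples $2, 3, 1$ of $r(7k+4)$. Any naive choice --- e.g.\ $K = \{2r, 0, 0, 0, 0\}$ --- only yields $T_K = r$, $S_K = 1$ and the trivial $V_K = 4r$, which is exactly why $s \geq 8$ resists this method. A secondary point requiring care is the hypothesis $|Q_{g_\lambda}| = 2^n - 1$: one should confirm that, for $n = 7r$, no nonzero derivative of $g_\lambda$ degenerates to an affine function, so that the clean ``in particular'' form of Theorem \ref{maintheorem} applies rather than the general $|Q_{g_\lambda}|$-dependent bound.
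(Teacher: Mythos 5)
Your proposal is correct and follows essentially the same route as the paper: the paper also takes $K=\{6r,2k,-1,3k+1,k\}$ with $\gcd(n,7k+4)=1$, obtains the shifted exponents $(14k+8)r,\,0,\,(21k+12)r,\,(7k+4)r$, concludes $S_K=7k+4$ and $V\leq V_K=3r$, and then invokes Theorem \ref{maintheorem}. Your added details (the CRT argument for the existence of $k$ and the explicit identification of $S_K$ as the greatest divisor of $T_K$ coprime to $n$) merely fill in steps the paper leaves implicit.
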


\section{Towards better lower bounding}\label{better}
In this section, it is shown that \eqref{Mesnager} based on
studying the distribution of $\{r_a,a\in \GF{n}^*\}$ would lead to
better lower bound on the second-order nonlinearity.
\subsection{Specific Case}\label{special sec} Consider the cubic Boolean function
$f_7=Tr(x^7)=Tr(x^{2^2+2+1})$. This function is a special case (with
$r=1, \mu=1$) of the wider Boolean function family $g_{\mu}=Tr(\mu
x^{2^{2r}+2^r+1})$ which will be considered in the next subsection.
It was known that when $n=4r$, $g_{\mu}$ is highly nonlinear
permutation \cite{19}, and has differential uniformity of four
\cite{18}, and thus the same resistance to both differential and
linear attacks as the inverse function.

In Example \ref{exam1} and Example \ref{exam2}, we considered that
for the cases $n=3$ and $n=4$ this Boolean function achieves the
maximum second-order nonlinearity. For $n\geq 5$ Theorem
\ref{maintheorem} can give only the same lower bound as Theorem
\ref{Li-Hu-Gao-G} because  $V=3$ for $n=5,7$ and $V=4$ for other
values of $n$. In this section, we show that \eqref{Mesnager} based
on studying the distribution of $\{r_a,a\in \GF{n}^*\}$ leads to a
better lower bounding for $nl_2(f_7)$.

The quadratic part of derivative $D_af_7$ of $f_7$ at $a\in \GF{n}$
is $Tr(a^4x^3+a^2x^5+ax^6)$, and $\varepsilon_{f_7,a}$ is the root
set of the linearized polynomial
$a^4x^2+(a^4x)^{2^{-1}}+a^2x^4+(a^2x)^{2^{-2}}+(ax^2)^{2^{-2}}+(ax^4)^{2^{-1}}$
(refer to \eqref{polarform}). We have
\begin{align*}
&a^4x^2+(a^4x)^{2^{-1}}+a^2x^4+(a^2x)^{2^{-2}}+(ax^2)^{2^{-2}}+(ax^4)^{2^{-1}}=0\\
&\Longleftrightarrow a^{16}x^8+a^8x^2+a^{8}x^{16}+a^2x+ax^2+a^2x^8=0\\
&\Longleftrightarrow a^{8}x^{16}+(a^{16}+a^2)x^8+(a^8+a)x^2+a^2x=0\\
&\Longleftrightarrow (ax)^{8}(a+x)^8+(ax)^2(a^3+x^3)^2+ax(a+x)=0\\
&\Longleftrightarrow (ax)^{8}(a+x)^8+(ax)^2(a+x)^2(a^2+ax+x^2)^2+ax(a+x)=0\\
&\Longleftrightarrow ax(a+x)\left[(ax)^{7}(a+x)^7+ax(a+x)(a^2+ax+x^2)^2+1\right]=0\\
&\Longleftrightarrow ax(a+x)\left[a^{7}(ax+x^2)^7+a(ax+x^2)(a^4+(ax+x^2)^2)+1\right]=0\\
&\Longleftrightarrow ax(a+x)\left[a^{5}(ax+x^2)(a^{2}(ax+x^2)^6+1)+a(ax+x^2)^3+1\right]=0\\
&\Longleftrightarrow ax(a+x)\left[a^{5}(ax+x^2)(a(ax+x^2)^3+1)^2+(a(ax+x^2)^3+1)\right]=0\\
&\Longleftrightarrow ax(a+x)\left[a(ax+x^2)^3+1\right]\left[a^{5}(ax+x^2)(a(ax+x^2)^3+1)+1\right]=0\\
&\Longleftrightarrow \left(ax+x^2\right)\cdot\left[(ax+x^2)^3+\frac{1}{a}\right]\cdot\left[(ax+x^2)^4+\frac{1}{a}(ax+x^2)+\frac{1}{a^6}\right]=0.\\
\end{align*}
Consequently,  $\varepsilon_{f_7,a}=K_{a,1}\cup K_{a,2}\cup
K_{a,3}$, where $K_{a,1}=\{x\in\GF{n}|ax+x^2=0\}=\{0,a\}$,
$K_{a,2}=\{x\in\GF{n}|(ax+x^2)^3=\frac{1}{a}\}$,
$K_{a,3}=\{x\in\GF{n}|(ax+x^2)^4+\frac{1}{a}(ax+x^2)=\frac{1}{a^6}\}$.
Note the polynomial
$\left(ax+x^2\right)\cdot\left[(ax+x^2)^3+\frac{1}{a}\right]\cdot\left[(ax+x^2)^4+\frac{1}{a}(ax+x^2)+\frac{1}{a^6}\right]$
is separable and so $K_1,K_2,K_3$ are disjoint each one to another.

Now, we will consider $|K_{a,2}|$ and $|K_{a,3}|$. First, note that
\begin{equation}\label{k2k3}
|K_{a,2}|\leq 6, |K_{a,3}|\leq 8
\end{equation}
 and that Lemma
\ref{parity} let us know that
\begin{equation}\label{k2+k3}
|K_{a,2}|+|K_{a,3}|=
\begin{cases}
2 \text{ or } 14, &\text{ if $n$ is even;}\\
0 \text{ or } 6, &\text{ if $n$ is odd.}
\end{cases}
\end{equation}

Then, from an easy consideration, one can see:
$K_{a,2}\neq\emptyset$ if{f} $a$ is a cubic element in $\GF{n}$ and
$Tr(\frac{1}{a^{2}b})=0$ for a cubic root $b$ of $a$, i.e. such as
$b^3=a$.

There are two cases to consider:

\begin{enumerate}
\item If $n$ is even, then the 3-th powering is a three-to-one mapping of $\GF{n}^*$, and so there are $\frac{2(2^n-1)}{3}$
$a$'s with $K_{a,2}=\emptyset$ (in this case, by \eqref{k2k3} and
\eqref{k2+k3} it must be $|K_{a,3}|=2$). For remained
$\frac{(2^n-1)}{3}$ $a$'s,
\begin{equation*}
|K_{a,2}|=
\begin{cases}
6,&\text{if $Tr(\frac{\zeta^i}{a^{7/3}})=0$ for all $0\leq i<3$;}\\
2,&\text{otherwise,}
\end{cases}
\end{equation*}
\begin{equation*}
|K_{a,3}|=
\begin{cases}
8,&\text{if $Tr(\frac{\zeta^i}{a^{7/3}})=0$ for all $0\leq i<3$;}\\
0,&\text{otherwise.}
\end{cases}
\end{equation*}

After all,  for even $n$, denoting
\[
\Psi_e=\{a\in \GF{n}^*|a \text{ is a cubic
 and } Tr(\frac{1}{a^2b})=0 \text{ for every cubic root } b \text{ of }
a\},
\]
we have
\begin{align*}
&\{a\in\GF{n}^*|r_a=4\}=\Psi_e,\\
&\{a\in\GF{n}^*|r_a=2\}=\GF{n}^*\setminus \Psi_e.
\end{align*}

It should be stressed that $|\Psi_e|\leq\frac{(2^n-1)}{3}$. By
\eqref{Mesnager}, we get
\begin{align*}
nl_2(f_7)&\geq \max\left(2^{n-2}-2^{\frac{n-2}{2}},
2^{n-1}-\frac{1}{2}\sqrt{2^n+|\Psi_e|2^{\frac{n+4}{2}}+(2^n-1-|\Psi_e|)2^{\frac{n+2}{2}}}\right)\\
&=\max\left(2^{n-2}-2^{\frac{n-2}{2}},
2^{n-1}-\frac{1}{2}\sqrt{2^n+|\Psi_e|2^{\frac{n+2}{2}}+(2^n-1)2^{\frac{n+2}{2}}}\right)\\
&\geq \max\left(2^{n-2}-2^{\frac{n-2}{2}},
2^{n-1}-\frac{1}{2}\sqrt{2^n+\frac{(2^n-1)}{3}2^{\frac{n+2}{2}}+(2^n-1)2^{\frac{n+2}{2}}}\right),
\end{align*}
i.e. for even $n\geq 6$ we have
\begin{equation}\label{f7-neven}
nl_2(f_7)\geq
2^{n-1}-\frac{1}{2}\sqrt{2^n+\frac{8}{3}\cdot2^{\frac{3n}{2}}-\frac{8}{3}2^{\frac{n}{2}}}.
\end{equation}

If $3\dag n$ and therefore the 7-th powering is a permutation of
$\GF{n}$, then for any cubics $a\neq a'\in \GF{n}^*$, when $b^3=a$
and $b'^3=a'$, one has $\frac{1}{a^2b}\neq \frac{1}{a'^2b'}$,
because third powering to the both side of $\frac{1}{a^2b}=
\frac{1}{a'^2b'}$ leads to $a^7=a'^7$ i.e. $a=a'$ i.e. a
contradiction. Thus, when $a$ takes all cubics of $\GF{n}^*$ and $b$
takes all three cubic roots of $a$, $\frac{1}{a^2b}$ takes all
$2^n-1$ elements in $\GF{n}^*$. Since in $\GF{n}^*$ there are
$2^{n-1}-1$ elements with absolute trace 0, it follows that
$|\Psi_e|\leq\frac{(2^{n-1}-2)}{3}$. Hence,
\begin{align*}
nl_2(f_7)&\geq
2^{n-1}-\frac{1}{2}\sqrt{2^n+|\Psi_e|2^{\frac{n+2}{2}}+(2^n-1)2^{\frac{n+2}{2}}}\\
&\geq
2^{n-1}-\frac{1}{2}\sqrt{2^n+\frac{(2^{n-1}-2)}{3}2^{\frac{n+2}{2}}+(2^n-1)2^{\frac{n+2}{2}}},
\end{align*}
i.e. when $n\equiv 2,4 \mod 6$, we have
\begin{equation}\label{f7-neven}
nl_2(f_7)\geq
2^{n-1}-\frac{1}{2}\sqrt{2^n+\frac{7}{3}\cdot2^{\frac{3n}{2}}-\frac{10}{3}\cdot2^{\frac{n}{2}}}.
\end{equation}

\item If $n$ is odd, then the 3-th power mapping is a permutation of
$\GF{n}$ and therefore we have:
\[
K_{a,2}\neq \emptyset\quad \text{if{f}} \quad
Tr(\frac{1}{a^{7/3}})=0 \quad \text{if{f}} \quad  |K_{a,2}|=2.
\]
The 7-th power mapping in $\GF{n}^*$ is injective if $3\dag n$ and
eight-to-one if $3|n$. Therefore, the number of $a(\neq0)$'s with
$|K_{a,2}|=2$ is $2^{n-1}-1$ if $2,3\dag n$ (i.e. $n\equiv \pm 1
\mod 6$) and $2^n-wt(f_7)-1$ if $2\dag n$ and $3|n$ (i.e. $n\equiv 3
\mod 6$). Furthermore, with regard to \eqref{k2k3} and
\eqref{k2+k3}, if $|K_{a,2}|=2$ then $|K_{a,3}|=4$.

On the other hand, it can not happen $|K_{a,3}|=6$. In fact,
$|K_{a,3}|=6$ means that the degree-4 equation
$T^4+\frac{1}{a}T+\frac{1}{a^6}=0$ with $T=ax+x^2$ has exactly 4
solutions $T_1, T_2, T_3, T_4$ in $\GF{n}$ such that
$Tr(\frac{T_1}{a^2})=Tr(\frac{T_2}{a^2})=Tr(\frac{T_3}{a^2})=0$ and
 $Tr(\frac{T_4}{a^2})=1$, which can not
happen because $T_1+T_2+T_3+T_4=0$. Hence, if $|K_{a,2}|=0$ then
$|K_{a,3}|=0$.

After all, for odd $n$, denoting
\[
\Psi_o=\{a\in \GF{n}^*| Tr(\frac{1}{a^{7/3}})=0\},
\]
we have
\begin{align*}
&\{a\in\GF{n}^*|r_a=3\}=\Psi_o,\\
&\{a\in\GF{n}^*|r_a=1\}=\GF{n}^*\setminus \Psi_o.
\end{align*}
Here, if $n\equiv \pm 1 \mod 6$ then $|\Psi_o|=2^{n-1}-1$, and if
$n\equiv 3 \mod 6$ then $|\Psi_o|=2^n-wt(f_7)-1$.
\end{enumerate}

By \eqref{Mesnager}, we get
\begin{align*}
nl_2(f_7)&\geq \max\left(2^{n-2}-2^{\frac{n-3}{2}},
2^{n-1}-\frac{1}{2}\sqrt{2^n+|\Psi_o|2^{\frac{n+3}{2}}+(2^n-1-|\Psi_o|)2^{\frac{n+1}{2}}}\right)\\
&=\max\left(2^{n-2}-2^{\frac{n-3}{2}},
2^{n-1}-\frac{1}{2}\sqrt{2^n+|\Psi_o|2^{\frac{n+1}{2}}+(2^n-1)2^{\frac{n+1}{2}}}\right).
\end{align*}
If $n\equiv \pm 1 \mod 6$, then this gives
\begin{align*}
nl_2(f_7)&\geq \max\left(2^{n-2}-2^{\frac{n-3}{2}},
2^{n-1}-\frac{1}{2}\sqrt{2^n+(2^{n-1}-1)2^{\frac{n+1}{2}}+(2^n-1)2^{\frac{n+1}{2}}}\right),
\end{align*}
i.e. for $n$ such as $n\equiv \pm 1 \mod 6$ and $n\geq 5$,
\begin{equation}\label{f7-n15}
nl_2(f_7)\geq
2^{n-1}-\frac{1}{2}\sqrt{2^n+3\cdot2^{\frac{3n+1}{2}}-2^{\frac{n+3}{2}}}.
\end{equation}
When $n\equiv 3 \mod 6$ and $n\geq 5$, we obtain
\begin{equation}\label{f7-n3}
nl_2(f_7)\geq
2^{n-1}-\frac{1}{2}\sqrt{2^n+(2^n-1)2^{\frac{n+3}{2}}-wt(f_7)2^{\frac{n+1}{2}}}.
\end{equation}

\subsection{Generalization to $g_\mu$ with $\gcd (n,r)=1$} An improved lower bound on second-nonlinearity of the cubic Boolean function
$g_{\mu}=Tr(\mu x^{2^{2r}+2^{r}+1})$, where $\mu \in \GF{n}$ and
$\gcd (n,r)=1$, is derived in this subsection, which can be seen as
a generalization of Subsection \ref{special sec}.

Denote $p=2^r$. The quadratic part of derivative $D_ag_\mu$ of
$g_\mu$ at $a\in \GF{n}$ is $Tr(\mu ax^{p^2+p}+\mu a^px^{p^2+1}+\mu
a^{p^2}x^{p+1})=Tr(\mu a^px^{p^2+1}+((\mu a)^{\frac{1}{p}}+\mu
a^{p^2})x^{p+1})$, and $\varepsilon_{g_\mu,a}$ is the root set of
the linearized polynomial
\[
L_{\mu,a}(x)=\mu a^px^{p^2}+((\mu a)^{\frac{1}{p}}+\mu
a^{p^2})x^{p}+(\mu a^px)^{\frac{1}{p^2}}+(((\mu a)^{\frac{1}{p}}+\mu
a^{p^2})x)^{\frac{1}{p}}
\]
(refer to \eqref{polarform}). We have
\begin{align*}
&\mu a^px^{p^2}+((\mu a)^{\frac{1}{p}}+\mu a^{p^2})x^{p}+(\mu
a^px)^{\frac{1}{p^2}}+(((\mu a)^{\frac{1}{p}}+\mu
a^{p^2})x)^{\frac{1}{p}}=0\\
&\Longleftrightarrow \mu^{p^2} a^{p^3}x^{p^4}+(\mu^p a^p+\mu^{p^2}
a^{p^4})x^{p^3}+\mu a^px+(\mu a+\mu^p a^{p^3})x^p=0,
\end{align*}
i.e.
\begin{equation}\label{maineq}
\mu^{p^2} (ax^p+a^px)^{p^3}+\mu^p( ax^{p^2}+
a^{p^2}x)^p+\mu(ax^p+a^px)=0.
\end{equation}

 Now, we let $z:=ax^p+a^px$. Then,
$x^p=\frac{z+a^px}{a}$ and $ x^{p^2}=\frac{z^p+a^{p^2}x^p}{a^p}$,
and
\[
ax^{p^2}+
a^{p^2}x=\frac{z^p+a^{p^2}x^p}{a^{p-1}}+a^{p^2}x=\frac{z^p+a^{p^2}x^p+a^{p^2+p-1}x}{a^{p-1}}=\frac{z^p+a^{p^2-1}z}{a^{p-1}}.
\]

Therefore, the above equation becomes
\begin{align*}
\mu^{p^2}
z^{p^3}+\mu^p\frac{z^{p^2}+a^{p(p^2-1)}z^p}{a^{p(p-1)}}+\mu z=0,
\end{align*}
or, equivalently
\begin{align*}
&\mu^{p^2}a^{p^2} z^{p^3}+\mu^p(a^pz^{p^2}+a^{p^3}z^p)+\mu a^{p^2}z=0\\
&\Longleftrightarrow \mu^{p^2}a^{p^2} z^{p^3}+\mu^pa^pz^{p^2}+\mu^pa^{p^3}z^p+\mu a^{p^2}z=0\\
&\Longleftrightarrow (\mu^{p^2}a^{p^2} z^{p^3}+\mu^pa^{p^3}z^p)+(\mu^pa^pz^{p^2}+\mu a^{p^2}z)=0\\
&\Longleftrightarrow (\mu^pa^pz^{p^2}+\mu
a^{p^2}z)^p+(\mu^pa^pz^{p^2}+\mu a^{p^2}z)=0,
\end{align*}
i.e.
\begin{equation}\label{Lg}
(\mu^pa^pz^{p^2}+\mu a^{p^2}z)\in \mathbb{F}_p=\GF{r}.
\end{equation}

Given $\gcd(n,r)=1$, since $\GF{n}\cap \GF{r}=\{0,1\}$, \eqref{Lg}
means that $\mu^pa^pz^{p^2}+\mu a^{p^2}z=0$ or $\mu^pa^pz^{p^2}+\mu
a^{p^2}z=1$.  When $z\neq0$, we have
\[
\mu^pa^pz^{p^2}+\mu a^{p^2}z=0 \Longleftrightarrow
z^{p^2-1}=\left(\frac{a^p}{\mu}\right)^{p-1} \Longleftrightarrow
z^{p+1}=\frac{a^p}{\mu},
\]
where it was regarded $(p-1,2^n-1)=1$ which follows from
$\gcd(n,r)=1$.

Consequently,  $\varepsilon_{g_\mu,a}=K_{a,1}\cup K_{a,2}\cup
K_{a,3}$, where $K_{a,1}=\{x\in\GF{n}|ax^p+a^px=0\}$,
$K_{a,2}=\{x\in\GF{n}|z^{p+1}=\frac{a^p}{\mu}, z=ax^p+a^px\}$,
$K_{a,3}=\{x\in\GF{n}|z^{p^2}+\left(\frac{a^p}{\mu}\right)^{p-1}z+\frac{1}{\mu^pa^p}=0,
z=ax^p+a^px \}$.

Now, we need following fact.
\begin{lemma}\label{gcd_p+1_general}(Lemma 11.1 in \cite{23}) For $1\leq r \leq n$,
\[
\gcd(2^r+1, 2^n-1)=
\begin{cases}
1, &\text{ if }\gcd(2r,n)=\gcd(r,n)\\
2^{\gcd(r,n)}+1, &\text{ if }\gcd(2r,n)=2\gcd(r,n).
\end{cases}
\]
\end{lemma}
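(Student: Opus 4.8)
The plan is to reduce everything to the classical identity $\gcd(2^a-1,2^b-1)=2^{\gcd(a,b)}-1$ together with the factorization $2^{2r}-1=(2^r-1)(2^r+1)$. Write $d=\gcd(r,n)$ and $g=\gcd(2^r+1,2^n-1)$. First I would note that the two cases are exhaustive and disjoint: since $\gcd(r,n)\mid\gcd(2r,n)$ and $\gcd(2r,n)\mid\gcd(2r,2n)=2\gcd(r,n)$, the quantity $\gcd(2r,n)$ is forced to equal either $d$ or $2d$. The common starting point for both cases is a universal upper bound on $g$: because $2^r+1$ divides $2^{2r}-1$, we have $g\mid 2^{2r}-1$, and since also $g\mid 2^n-1$, the standard identity gives $g\mid\gcd(2^{2r}-1,2^n-1)=2^{\gcd(2r,n)}-1$.

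For the first case $\gcd(2r,n)=d$, I would argue that $g\mid 2^{d}-1$, and since $d\mid r$ forces $2^{d}-1\mid 2^{r}-1$, this yields $g\mid 2^{r}-1$. Combined with $g\mid 2^{r}+1$ we get $g\mid\gcd(2^{r}-1,2^{r}+1)$; as $(2^{r}+1)-(2^{r}-1)=2$ and $2^{r}-1$ is odd, this gcd equals $1$, so $g=1$, as claimed.

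For the second case $\gcd(2r,n)=2d$, I would establish the two divisibilities $2^{d}+1\mid g$ and $g\mid 2^{d}+1$ separately. A short parity check (writing $r=dr'$, $n=dn'$ with $\gcd(r',n')=1$ and using $\gcd(2r',n')=\gcd(2,n')$) shows $\gcd(2r,n)=2d$ is equivalent to $n'$ even and $r'$ odd. For the forward divisibility, $r/d$ odd gives $2^{d}+1\mid (2^{d})^{r/d}+1=2^{r}+1$ since $x+1\mid x^{m}+1$ for odd $m$, while $2d\mid n$ gives $2^{d}+1\mid 2^{2d}-1\mid 2^{n}-1$; hence $2^{d}+1$ divides both arguments and therefore their gcd $g$. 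For the reverse divisibility I would use the upper bound $g\mid 2^{2d}-1=(2^{d}-1)(2^{d}+1)$ from the first paragraph, and show $\gcd(g,2^{d}-1)=1$: any common divisor of $g$ and $2^{d}-1$ divides $2^{r}-1$ (as $d\mid r$) and also divides $g\mid 2^{r}+1$, hence divides $\gcd(2^{r}-1,2^{r}+1)=1$. Coprimality of $g$ with $2^{d}-1$ then forces $g\mid 2^{d}+1$, and the two divisibilities give $g=2^{d}+1$.

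The routine inputs are the two textbook facts $\gcd(2^{a}-1,2^{b}-1)=2^{\gcd(a,b)}-1$ and $x+1\mid x^{m}+1$ for odd $m$. I expect the only genuinely delicate point to be the reverse divisibility in the second case: extracting $g\mid 2^{d}+1$ from $g\mid(2^{d}-1)(2^{d}+1)$ cannot be done factor-by-factor and instead rests on first proving $\gcd(g,2^{d}-1)=1$, which is where the interplay between $g\mid 2^{r}+1$ and $2^{d}-1\mid 2^{r}-1$ is essential.
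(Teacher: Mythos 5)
The paper does not prove this lemma at all --- it imports it verbatim as Lemma 11.1 of McEliece's book \cite{23} --- so there is no in-paper argument to compare against. Your proof is correct and complete as a self-contained replacement: the case split is genuinely exhaustive (since $d\mid\gcd(2r,n)\mid 2d$ with $d=\gcd(r,n)$), the universal bound $g\mid 2^{\gcd(2r,n)}-1$ via $2^r+1\mid 2^{2r}-1$ is sound, the first case correctly collapses to $g\mid\gcd(2^r-1,2^r+1)=1$, and in the second case you correctly handle the one delicate step --- extracting $g\mid 2^d+1$ from $g\mid(2^d-1)(2^d+1)$ by first establishing $\gcd(g,2^d-1)=1$, rather than dividing factor-by-factor. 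The only inputs are the standard identities $\gcd(2^a-1,2^b-1)=2^{\gcd(a,b)}-1$ and $x+1\mid x^m+1$ for odd $m$, both of which you use correctly.
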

Therefore, when $\gcd(n,r)=1$,
\begin{equation}\label{gcd_p+1_2^n-1}
\gcd(p+1, 2^n-1)=
\begin{cases}
1, &\text{ if $n$ is an odd}\\
3, &\text{ if $n$ is an even}.
\end{cases}
\end{equation}
Since
$K_{a,1}=\{x\in\GF{n}|(\frac{x}{a})^p+\frac{x}{a}=0\}=\{x\in\GF{n}|\frac{x}{a}\in\GF{r}\}=\{x\in\GF{n}\cap
a\GF{r}\}=\{0,a\}$, for every $z\in\GF{n}$, the linear equation
$z=ax^p+a^px$ has at most two solutions. By using Lemma
\ref{lemrootnum}, we can see:
\begin{equation}\label{k2k3_g}
|K_{a,2}|\leq 6, |K_{a,3}|\leq 8
\end{equation}
 and that Lemma
\ref{parity} let us know that
\begin{equation}\label{k2+k3_g}
|K_{a,2}|+|K_{a,3}|=
\begin{cases}
2 \text{ or } 14, &\text{ if $n$ is even;}\\
0 \text{ or } 6, &\text{ if $n$ is odd.}
\end{cases}
\end{equation}

On the other hand, when $\gcd(n,r)=1$, if the equation $z=ax^p+a^px$
for $z\in \GF{n}$ has a solution $x\in \GF{n}$, then
$Tr(\frac{z}{a^{p+1}})=0$. The reverse of this proposition is no
generally validate and thus it seems hard to get the exact
distribution of $|K_{a,2}|$ as done in Subsection \ref{special sec}.

However the exactly same lower-bound-estimations as in Subsection
\ref{special sec} still hold as described below.
 To begin with, let us note $\gcd(p^2+p+1,
2^n-1)=\gcd((p^2+p+1)(p-1), 2^n-1)=\gcd(p^3-1,
2^n-1)=2^{\gcd(3,n)}-1$.

\begin{enumerate}
\item For even $n$, there are
$\frac{2(2^n-1)}{3}$ $a$'s such that $\frac{a^p}{\mu}$ is not a
$(p+1)$-th power (or, by \eqref{gcd_p+1_2^n-1}, equivalently,
$\frac{a^p}{\mu}$ is a non-cubic) in $\GF{n}$, i.e, $|K_{a,2}|=0$
(in this case $|K_{a,3}|=2$ by \eqref{k2k3_g} and \eqref{k2+k3_g},
and $r_a=2$). That is, there are at most $\frac{(2^n-1)}{3}$ $a$'s
such that $r_a=4$.

Furthermore, if $3\dag n$ and therefore the $(p^2+p+1)$-th powering
is a permutation of $\GF{n}$, then for any  $a\neq a'\in \GF{n}^*$
such that $\frac{a^p}{\mu}$ and $\frac{a'^p}{\mu}$ are $(p+1)$-th
powerings, when $b^{p+1}=\frac{a^p}{\mu}$ and
$b'^{p+1}=\frac{a'^p}{\mu}$, one has $\frac{b}{a^{p+1}}\neq
\frac{b'}{a'^{p+1}}$, because $(p+1)-$th powering to the both side
of $\frac{b}{a^{p+1}}= \frac{b'}{a'^{p+1}}$ leads to
$a^{p^2+p+1}=a'^{p^2+p+1}$ i.e. $a=a'$ i.e. a contradiction. Thus,
when $a$ takes all elements of $\GF{n}^*$ such that
$\frac{a^p}{\mu}$ are $(p+1)$-th powerings and $b$ takes all three
$(p+1)-$th power roots of $\frac{a^p}{\mu}$, $\frac{b}{a^{p+1}}$
takes all $2^n-1$ elements in $\GF{n}^*$. On the other hand, by
\eqref{k2k3_g} and \eqref{k2+k3_g}, if $r_a=4$, then $K_{a,2}=6$ and
so it must be true that $Tr(\frac{b}{a^{p+1}})=0$ for all three
$(p+1)-$th power root $b$'s of $\frac{a^p}{\mu}$. Since in
$\GF{n}^*$ there are $2^{n-1}-1$ elements with absolute trace 0, it
follows that there are only at most $\frac{(2^{n-1}-2)}{3}$ $a$'s
with $r_a=4$.
\item For odd $n$, by
\eqref{gcd_p+1_2^n-1} every element of $\GF{n}$ is a $(p+1)$-th
power and it holds
\begin{equation}\label{trace}
Tr(\frac{z}{a^{p+1}})=Tr\left(\frac{(\frac{a^p}{\mu})^{1/(p+1)}}{a^{p+1}}\right)=Tr\left(\left(\frac{1}{\mu
a^{p^2+p+1}}\right)^{\frac{1}{p+1}}\right).
\end{equation}
First, we will show that $|K_{a,3}|=6$ can not happen. Let us
suppose the opposite: $|K_{a,3}|=6$. This is possible only when the
equation
$z^{p^2}+\left(\frac{a^p}{\mu}\right)^{p-1}z+\frac{1}{\mu^pa^p}=0$
has 4 solutions $z_1,z_2,z_3,z_4$ (please, regard Lemma
\ref{lemrootnum}) and for exactly one (assuming it is $z_4$ wlog )
among these solutions the equation $z_4=ax^p+a^px$ has no solution,
which is a contradiction because given $x_1,x_2,x_3$ that are
solutions of $z_1=ax^p+a^px,z_2=ax^p+a^px, z_3=ax^p+a^px$
respectively, $x=x_1+x_2+x_3$ is a solution of $z_4=ax^p+a^px$
(since $z_4=z_1+z_2+z_3$).

\begin{enumerate}
\item If $n\equiv \pm1 \mod 6$, then $\gcd(p^2+p+1, 2^n-1)=1$ and by
\eqref{trace} there are exactly $2^{n-1}$ $a$'s such that
$Tr(\frac{z}{a^{p+1}})=1$ for
$z=\left(\frac{a^p}{\mu}\right)^{\frac{1}{p+1}}$. Thus, there are at
least $2^{n-1}$ $a$'s such that $|K_{a,2}|=0$ (in this case, by
\eqref{k2+k3_g} $|K_{a,3}|=0$ and so $r_a=1$).

\item If $n\equiv 3 \mod 6$, then $\gcd(p^2+p+1, 2^n-1)=2^3-1=7$ and
therefore there are exactly $wt(f_7)$ $a$'s such that
$Tr(\frac{z}{a^{p+1}})=1$ for
$z=\left(\frac{a^p}{\mu}\right)^{\frac{1}{p+1}}$. Thus, there are at
least $wt(f_7)$ $a$'s such that $|K_{a,2}|=|K_{a,3}|=0$ and $r_a=1$.
\end{enumerate}
The exactly same derivation as done in Subsection \ref{special sec}
gives:
\end{enumerate}

\begin{theorem} Let $g_{\mu}=Tr(\mu x^{2^{2r}+2^{r}+1})$, where $\mu \in \GF{n}$
$\gcd (n,r)=1$ and $n\geq 4$.
\begin{enumerate}
\item  For $n\equiv 2,4 \mod 6$,
\begin{equation}\label{gmu-n24}
nl_2(g_\mu)\geq
2^{n-1}-\frac{1}{2}\sqrt{2^n+\frac{7}{3}\cdot2^{\frac{3n}{2}}-\frac{10}{3}2^{\frac{n}{2}}}.
\end{equation}
\item  For $n\equiv 0 \mod 6$,
\begin{equation}\label{gmu-n0}
nl_2(g_\mu)\geq
2^{n-1}-\frac{1}{2}\sqrt{2^n+\frac{8}{3}\cdot2^{\frac{3n}{2}}-\frac{8}{3}2^{\frac{n}{2}}}.
\end{equation}
\item If $n\geq 5$ and $n\equiv \pm 1 \mod 6$, then
\begin{equation}\label{gmu-n15}
nl_2(g_\mu)\geq
2^{n-1}-\frac{1}{2}\sqrt{2^n+3\cdot2^{\frac{3n+1}{2}}-2^{\frac{n+3}{2}}}.
\end{equation}
\item If $n\equiv 3 \mod 6$ and $n\geq 5$, then
\begin{equation}\label{gmu-n3}
nl_2(g_\mu)\geq
2^{n-1}-\frac{1}{2}\sqrt{2^n+(2^n-1)2^{\frac{n+3}{2}}-wt(f_7)2^{\frac{n+1}{2}}}.
\end{equation}
\end{enumerate}
\end{theorem}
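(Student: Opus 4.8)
The plan is to feed the distribution of radical dimensions $\{r_a : a\in\GF{n}^\star\}$, assembled above from the decomposition $\varepsilon_{g_\mu,a}=K_{a,1}\cup K_{a,2}\cup K_{a,3}$, directly into the second term of the bound \eqref{Mesnager}, and then to simplify the radical separately for each residue of $n$ modulo $6$. Since $|K_{a,1}|=2$ and the three pieces are disjoint, we have $2^{r_a}=2+|K_{a,2}|+|K_{a,3}|$, so the parity constraint \eqref{k2+k3_g} (itself a consequence of Lemma \ref{parity}) pins down $r_a\in\{2,4\}$ when $n$ is even and $r_a\in\{1,3\}$ when $n$ is odd.

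First I would record, in each parity, the split of $\GF{n}^\star$ by the value of $r_a$ and invoke the counting already obtained. For even $n$, writing $N_4=|\{a:r_a=4\}|$ and $N_2=2^n-1-N_4$, the relevant contribution is
\[
\sum_{a\in\GF{n}^\star}2^{\frac{n+r_a}{2}}=N_4\,2^{\frac{n+4}{2}}+N_2\,2^{\frac{n+2}{2}}=2^{\frac{n+2}{2}}\bigl(2^n-1+N_4\bigr),
\]
so any upper bound on $N_4$ yields a lower bound on $nl_2(g_\mu)$. The global bound $N_4\le(2^n-1)/3$ produces the coefficient $8/3$ of \eqref{gmu-n0}, while the sharper bound $N_4\le(2^{n-1}-2)/3$, valid when $3\nmid n$, produces the coefficient $7/3$ of \eqref{gmu-n24}. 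For odd $n$, putting $N_3=|\Psi_o|$ and $N_1=2^n-1-N_3$, the same manipulation gives
\[
\sum_{a\in\GF{n}^\star}2^{\frac{n+r_a}{2}}=2^{\frac{n+1}{2}}\bigl(2^n-1+N_3\bigr),
\]
into which I would substitute $N_3=2^{n-1}-1$ for $n\equiv\pm1\pmod 6$ to obtain \eqref{gmu-n15}, and $N_3=2^n-wt(f_7)-1$ for $n\equiv3\pmod 6$ to obtain \eqref{gmu-n3}.

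The role of the preceding subsection is precisely to license these four counts, and that is where the real obstacle sits: establishing that $r_a$ never exceeds its nominal maximum (i.e.\ that $|K_{a,3}|=6$ cannot occur for odd $n$, via the linearity argument $z_4=z_1+z_2+z_3$), and that the number of $a$ with the large radical is correctly bounded (via the $(p^2+p+1)$-th power map together with the trace condition \eqref{trace} and the gcd computation \eqref{gcd_p+1_2^n-1}). Given those facts, the four estimates follow by factoring out the common power of two; the first term $2^{n-2}-\frac14\min_a 2^{\frac{n+r_a}{2}}$ of \eqref{Mesnager} gives the smaller value for $n\ge4$, so only the square-root term is retained, yielding the stated inequalities.
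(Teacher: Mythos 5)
Your proposal is correct and follows essentially the same route as the paper: the paper also feeds the bounds on the number of $a$'s with $r_a=4$ (even $n$) or $r_a=3$ (odd $n$), obtained from the decomposition $\varepsilon_{g_\mu,a}=K_{a,1}\cup K_{a,2}\cup K_{a,3}$ together with the parity constraint, into the square-root term of \eqref{Mesnager} and simplifies case by case modulo $6$ (the paper literally invokes ``the exactly same derivation as in Subsection \ref{special sec}''). Your compact rewriting of the sum as $2^{\frac{n+2}{2}}(2^n-1+N_4)$, resp. $2^{\frac{n+1}{2}}(2^n-1+N_3)$, is just a streamlined version of the paper's identical manipulation, and your identification of the counting facts as the real content matches where the paper places the work.
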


As evident, the new obtained lower bounds are significantly better
than ones given by Theorem \ref{Gode-Gango-G}.

\subsection{Second-Order Nonlinearity of $g_\mu$ with $\gcd(n,r)\neq 1$}
If $n=3r$, then \eqref{maineq} reduces to $
(\mu^{p^2}+\mu^p+\mu)(ax^p+a^px)=0$ and therefore has $p$ solutions
(to be precise, under the condition $Tr_{r}^{n}(\mu)\neq 0$), that
is, $r_a\leq r$ for every $a\in \GF{n}$. So, the lower bound stated
in the item 1 of Example \ref{manypaper} follows.

 When $\gcd(n,r)\neq 1$ and $n\neq 3r$, from
\eqref{Lg} it follows that $\varepsilon_{g_\mu,a}$ is the solution
set of
\begin{align*}
z=ax^p+a^px, \prod_{\omega\in \GF{\gcd(n,r)}}(z^{p^2}+
\left(\frac{a^p}{\mu}\right)^{p-1}z+\frac{\omega}{\mu^pa^p})=0.
\end{align*}

Consequently,  $\varepsilon_{g_\mu,a}=K_{a,1}\cup K_{a,2}\cup
K_{a,3}$, where $K_{a,1}=\{x\in\GF{n}|ax^p+a^px=0\}$,
$K_{a,2}=\{x\in\GF{n}|z^{p+1}=\frac{a^p}{\mu}\GF{\gcd(n,r)}^*,
z=ax^p+a^px\}$, $K_{a,3}=\{x\in\GF{n}|\prod_{\omega\in
\GF{\gcd(n,r)}^*}(z^{p^2}+
\left(\frac{a^p}{\mu}\right)^{p-1}z+\frac{\omega}{\mu^pa^p})=0,
z=ax^p+a^px \}$.

Since
$K_{a,1}=\{x\in\GF{n}|(\frac{x}{a})^p+\frac{x}{a}=0\}=a\GF{\gcd(n,r)}$,
for every $z\in\GF{n}$, the linear equation $z=ax^p+a^px$ has at
most $2^{\gcd(n,r)}$ solutions. And, if the linear equation
$z^{p^2}+
\left(\frac{a^p}{\mu}\right)^{p-1}z+\frac{\omega}{\mu^pa^p}=0$ has a
solution in $\GF{n}$, then it has the same number of solutions as
$z^{p^2}+ \left(\frac{a^p}{\mu}\right)^{p-1}z=0$ has  in $\GF{n}$,
i.e. $z^{p+1}=\frac{a^p}{\mu}\GF{\gcd(n,r)}^*$ or $z=0$.

Corollary 1 and Corollary 2 of \cite{19} states the upper bound on
root number of the special linearized polynomial $z^{p^2}+az^p+bz$
where $a,b\in\GF{n}$, $p=2^r$ and $\gcd(n,r)=1$, to be 4. When
$a=0$, but without the restriction $\gcd(n,r)=1$, we can get the
exact root number by using Lemma \ref{gcd_p+1_general}.
\begin{proposition}
For the linearized polynomial $z^{p^2}+bz$ where $b\in\GF{n}^*$ and
$p=2^r$, its root number is
\begin{enumerate}
\item 1 if  $b$ is not a $(p^2-1)-$power in $\GF{n}$; \\
\item $2^{\gcd(n,r)}$ if $\|n\|_2\geq \|r\|_2$ and  $b$ is a $(p-1)-$power (so also a $(p^2-1)-$power) in $\GF{n}$; \\
\item
$2^{2\gcd(n,r)}$ if $\|n\|_2< \|r\|_2$  and $b$ is a $(p^2-1)-$power in $\GF{n}$.\\
\end{enumerate}
\end{proposition}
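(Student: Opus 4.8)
The plan is to reduce the count to a single power equation in the cyclic group $\GF{n}^*$ and then to evaluate one greatest common divisor. First I would isolate the trivial root: $z=0$ always satisfies $z^{p^2}+bz=0$. Since $b\neq 0$, every nonzero root $z$ obeys
\[
z^{p^2}+bz=0 \iff z^{p^2}=bz \iff z^{p^2-1}=b,
\]
the last step dividing by $z\neq 0$ and using characteristic two. Hence the root number equals $1+N$, where $N$ is the number of solutions in $\GF{n}^*$ of $z^{p^2-1}=b$.

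Next I would apply the classical description of powers in the cyclic group $\GF{n}^*$ of order $2^n-1$: for any exponent $k$, the equation $z^{k}=b$ has a solution if and only if $b$ is a $\gcd(k,2^n-1)$-th power, in which case it has exactly $\gcd(k,2^n-1)$ solutions, and otherwise none. Applying this with $k=p^2-1=2^{2r}-1$ and invoking the standard identity $\gcd(2^{a}-1,2^{c}-1)=2^{\gcd(a,c)}-1$ gives
\[
\gcd(p^2-1,\,2^n-1)=2^{\gcd(2r,n)}-1.
\]
Consequently $N=0$ when $b$ is not a $(p^2-1)$-th power, giving root number $1$ and the first assertion; while $N=2^{\gcd(2r,n)}-1$ when $b$ is a $(p^2-1)$-th power, giving root number $2^{\gcd(2r,n)}$.

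It then remains to translate $\gcd(2r,n)$ into the two subcases and to reconcile the power condition of the second assertion. Writing $2^{a}$ and $2^{c}$ for the exact powers of two dividing $n$ and $r$, the hypothesis $\|n\|_2\ge\|r\|_2$ is precisely $a\le c$. A short computation on $2$-adic valuations (the same dichotomy underlying Lemma \ref{gcd_p+1_general}) shows $\gcd(2r,n)=\gcd(r,n)$ when $a\le c$ and $\gcd(2r,n)=2\gcd(r,n)$ when $a>c$; these are the only possibilities, since $\gcd(2r,n)/\gcd(r,n)$ divides $2$. Substituting yields root number $2^{\gcd(n,r)}$ in the first case and $2^{2\gcd(n,r)}$ in the second. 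Finally, in the case $a\le c$ one has $\gcd(p-1,2^n-1)=2^{\gcd(r,n)}-1=\gcd(p^2-1,2^n-1)$, so being a $(p-1)$-th power and being a $(p^2-1)$-th power are equivalent there, which justifies the parenthetical remark of the second assertion; whereas when $a>c$ the $(p^2-1)$-power hypothesis is genuinely stronger and must be stated as in the third assertion. I expect the main obstacle to be not any single deep step but the careful bookkeeping of this last paragraph: correctly matching the $\|\cdot\|_2$ inequality to the two values of $\gcd(2r,n)$, and verifying that the natural-looking $(p-1)$-power condition suffices only when $\|n\|_2\ge\|r\|_2$.
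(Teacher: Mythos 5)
Your argument is correct and follows the route the paper intends (the paper states this proposition without a written proof, merely remarking that it follows from Lemma \ref{gcd_p+1_general}): strip off the root $z=0$, reduce to $z^{p^2-1}=b$, and count solutions via the gcd of the exponent with $2^n-1$. The only cosmetic difference is that you invoke $\gcd(2^{2r}-1,2^n-1)=2^{\gcd(2r,n)}-1$ directly, whereas the paper's citation suggests factoring $p^2-1=(p-1)(p+1)$ and using $\gcd(2^r+1,2^n-1)$ from Lemma \ref{gcd_p+1_general}; both give the same dichotomy, and your final paragraph correctly matches the $\|\cdot\|_2$ inequality (note $\|n\|_2\ge\|r\|_2$ means $v_2(n)\le v_2(r)$ under the paper's convention) to the two values of $\gcd(2r,n)$ and justifies the parenthetical equivalence of $(p-1)$- and $(p^2-1)$-powers in the second case.
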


From the facts mentioned above, following inequalities follow.
\begin{equation*}
|K_{a,2}|\leq \begin{cases}
2^{\gcd(n,r)}(2^{\gcd(n,r)}-1), &\text{ if }\|n\|_2\geq\|r\|_2\\
2^{\gcd(n,r)}(2^{2\gcd(n,r)}-1), &\text{ if }\|n\|_2<\|r\|_2.
\end{cases}
\end{equation*}

\begin{equation*}
|K_{a,3}|\leq \begin{cases}
2^{2\gcd(n,r)}(2^{\gcd(n,r)}-1), &\text{ if }\|n\|_2\geq\|r\|_2\\
2^{3\gcd(n,r)}(2^{\gcd(n,r)}-1), &\text{ if }\|n\|_2<\|r\|_2.
\end{cases}
\end{equation*}

Thus
\[
 r_a\leq  \begin{cases}
 3\gcd(n,r), &\text{ if }\|n\|_2\geq\|r\|_2\\
 4\gcd(n,r), &\text{ if }\|n\|_2<\|r\|_2,
 \end{cases}
 \] and by using Lemma \ref{parity} we improve on the lower bound \eqref{i+j} as follows:
\begin{equation}\label{i+j_1}
nl_2(g_\mu)\geq
\begin{cases}
2^{n-1}-\frac{1}{2}\sqrt{2^n+(2^n-1)2^{\lfloor\frac{n+3gcd(n,r)}{2}\rfloor}}, &\text{ if }\|n\|_2\geq\|r\|_2\\
2^{n-1}-\frac{1}{2}\sqrt{2^n+(2^n-1)2^{\lfloor\frac{n+4gcd(n,r)}{2}\rfloor}},
&\text{ if }\|n\|_2<\|r\|_2.
 \end{cases}
\end{equation}

Since
$\gcd(p+1,2^{\gcd(n,r)}-1)|\gcd(p+1,2^r-1)=\gcd(2^r+1,2^r-1)=1$,
every element of $\GF{\gcd(n,r)}^*$ has unique $(p+1)$-th power root
in the field itself. Hence, when $\|n\|_2<\|r\|_2$, for the
$\frac{2^{\gcd(n,r)}}{2^{\gcd(n,r)}+1}(2^n-1)$ $a$'s such that
$\frac{a^p}{\mu}$ is not a $(p+1)$-th power of some entry in
$\GF{n}$), the equation $z^{p+1}=\frac{a^p}{\mu}\GF{\gcd(n,r)}^*$
has no solution, and so  $z^{p^2}+
\left(\frac{a^p}{\mu}\right)^{p-1}z+\frac{\omega}{\mu^pa^p}=0$ for
any $\omega\in \GF{\gcd(n,r)}^*$ has at most one solution. Thus,
when $\|n\|_2<\|r\|_2$, for such
$\frac{2^{\gcd(n,r)}}{2^{\gcd(n,r)}+1}(2^n-1)$ $a$'s,

\[
|K_{a,2}|=0, |K_{a,3}|=2^{\gcd(n,r)}(2^{\gcd(n,r)}-1)
\]
and
\[
r_a\leq 2\gcd(n,r).
\]

By \eqref{Mesnager}, when $\|n\|_2<\|r\|_2$ (note that in  this case
$n$ is even ), we get
\begin{align*}
nl_2(g_\mu)&\geq
2^{n-1}-\frac{1}{2}\sqrt{2^n+(2^n-1)(\frac{1}{2^{\gcd(n,r)}+1}2^{\frac{n+4\gcd(n,r)}{2}}+\frac{2^{\gcd(n,r)}}{2^{\gcd(n,r)}+1}2^{\frac{n+2\gcd(n,r)}{2}}})\\
&=2^{n-1}-\frac{1}{2}\sqrt{2^n+\frac{2^{2\gcd(n,r)+1}}{{2^{\gcd(n,r)}+1}}(2^{\frac{3n}{2}}-2^{\frac{n}{2}})}.
\end{align*}

\begin{theorem}\label{gcdneq1} For $g_{\mu}=Tr(\mu x^{2^{2r}+2^{r}+1})$, where $\mu \in
\GF{n}$,
$\gcd (n,r)\neq 1$ and $n\geq 4$.
\begin{equation}
nl_2(g_\mu)\geq
\begin{cases}
2^{n-1}-\frac{1}{2}\sqrt{2^n+(2^n-1)2^{\lfloor\frac{n+3gcd(n,r)}{2}\rfloor}}, &\text{ if }\|n\|_2\geq\|r\|_2\\
2^{n-1}-\frac{1}{2}\sqrt{2^n+(2^n-1)2^{\frac{n}{2}}\frac{2^{2\gcd(n,r)+1}}{{2^{\gcd(n,r)}+1}}},
&\text{ if }\|n\|_2<\|r\|_2.
 \end{cases}
\end{equation}
\end{theorem}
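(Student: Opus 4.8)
The plan is to assemble the statement directly from the radical-distribution analysis already carried out for $\varepsilon_{g_\mu,a}=K_{a,1}\cup K_{a,2}\cup K_{a,3}$, splitting according to whether $\|n\|_2\geq\|r\|_2$ or $\|n\|_2<\|r\|_2$. In both branches the strategy is the same: obtain information on the dimension $r_a$ of the radical of the quadratic part of $D_ag_\mu$ for $a\in\GF{n}^*$, and then feed that information into the lower bound \eqref{Mesnager} (equivalently Lemma \ref{lemfunda} with $|Q_{g_\mu}|=2^n-1$). The parity constraint of Lemma \ref{parity} is used throughout to sharpen the admissible values of $r_a$.

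For the first case $\|n\|_2\geq\|r\|_2$ I would simply invoke the uniform bound $r_a\leq 3\gcd(n,r)$, which follows from the displayed inequalities on $|K_{a,2}|$ and $|K_{a,3}|$ together with Lemma \ref{parity}, and which holds for every $a\in\GF{n}^*$. Taking the common upper bound $t=3\gcd(n,r)$ in Lemma \ref{lemfunda} yields the first line of the claimed estimate at once; this branch is nothing more than the first case of \eqref{i+j_1} and needs no further input.

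The substantive work is the second case $\|n\|_2<\|r\|_2$ (so $n$ is even). Here I would avoid the crude bound $r_a\leq 4\gcd(n,r)$ and instead use the refined two-valued distribution of $\{r_a\}$ established above: since $\gcd(p+1,2^{\gcd(n,r)}-1)=1$, the $(p+1)$-th power roots inside $\GF{\gcd(n,r)}$ are unique, and a counting argument shows that for $\frac{2^{\gcd(n,r)}}{2^{\gcd(n,r)}+1}(2^n-1)$ of the $a$'s the element $\tfrac{a^p}{\mu}$ is not a $(p+1)$-th power in $\GF{n}$, forcing $K_{a,2}=\emptyset$ and hence $r_a\leq 2\gcd(n,r)$, while the complementary $a$'s contribute at most $r_a\leq 4\gcd(n,r)$. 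Substituting this distribution into \eqref{Mesnager}, grouping the two weight classes, and using the identity $(2^n-1)2^{n/2}=2^{3n/2}-2^{n/2}$ collapses the expression to the closed form in the second line.

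The main obstacle is the exact bookkeeping in this second case: one must confirm that the count $\frac{2^{\gcd(n,r)}}{2^{\gcd(n,r)}+1}(2^n-1)$ of values of $a$ killing $K_{a,2}$ is correct and that these are precisely the $a$'s with $r_a\leq 2\gcd(n,r)$, which rests on the preceding proposition on the root number of $z^{p^2}+bz$ and on the coprimality $\gcd(p+1,2^{\gcd(n,r)}-1)=1$. Once the distribution is pinned down, the remaining passage through \eqref{Mesnager} is routine algebra; the only point requiring care is to check that the weaker estimate on the complementary set does not dominate, so that the displayed formula is genuinely a valid lower bound rather than an artifact of discarding those terms.
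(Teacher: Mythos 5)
Your proposal matches the paper's own argument: the first branch is exactly the first case of \eqref{i+j_1} obtained from the uniform bound $r_a\leq 3\gcd(n,r)$, and the second branch uses the same refined two-class distribution (the fraction $\frac{2^{\gcd(n,r)}}{2^{\gcd(n,r)}+1}$ of $a$'s with $\frac{a^p}{\mu}$ not a $(p+1)$-th power forcing $K_{a,2}=\emptyset$ and $r_a\leq 2\gcd(n,r)$, the rest bounded by $4\gcd(n,r)$) fed into \eqref{Mesnager}, with the same collapsing algebra. No substantive difference from the paper's proof.
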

This lower bound is better than one which we showed in \eqref{i+j}
in particular as $gg(n,r)\geq \gcd(n,r)$.
\begin{corollary}
If $n=sr$ where $s$ is an odd greater than 3, then
\begin{equation}
nl_2(g_\mu)\geq
2^{n-1}-\frac{1}{2}\sqrt{2^n+(2^n-1)2^{\frac{n+3r}{2}}}
\end{equation}

If $n=sr$ where $s$ is an even greater than 2, then
\begin{equation}
nl_2(g_\mu)\geq
2^{n-1}-\frac{1}{2}\sqrt{2^n+(2^n-1)2^{\frac{n}{2}}\frac{2^{2r+1}}{{2^{r}+1}}}
\end{equation}
\end{corollary}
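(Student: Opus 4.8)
The plan is to convert the kernel–dimension information about the derivatives of $g_\mu$ into a second–order nonlinearity bound through the two tools already prepared above: Lemma~\ref{lemfunda} (a uniform bound on $r_a=\dim_{\GF{}}\varepsilon_{g_\mu,a}$) and the sharper inequality \eqref{Mesnager}, whose effective content is $nl_2(g_\mu)\geq 2^{n-1}-\frac12\sqrt{2^n+\sum_{a\in\GF{n}^*}2^{(n+r_a)/2}}$. Everything therefore comes down to estimating the multiset $\{r_a\}_{a\in\GF{n}^*}$, for which I would use the decomposition $\varepsilon_{g_\mu,a}=K_{a,1}\cup K_{a,2}\cup K_{a,3}$ coming from \eqref{Lg}, the cardinality bounds \eqref{k2k3_g}, and the parity restriction \eqref{k2+k3_g} furnished by Lemma~\ref{parity}. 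Since that decomposition was derived under $n\neq 3r$, I would dispose of $n=3r$ at the outset: there \eqref{maineq} collapses to $(\mu^{p^2}+\mu^p+\mu)(ax^p+a^px)=0$, so $r_a\leq r\leq 3\gcd(n,r)$, which already satisfies (with room to spare) the bound used below.

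For the branch $\|n\|_2\geq\|r\|_2$ I would simply invoke the uniform estimate $r_a\leq 3\gcd(n,r)$ read off from \eqref{k2k3_g}--\eqref{k2+k3_g}. Because $r_a\equiv n\pmod 2$ (Lemma~\ref{parity}), every admissible $r_a$ obeys $\tfrac{n+r_a}{2}\leq\lfloor\tfrac{n+3\gcd(n,r)}{2}\rfloor$, so applying Lemma~\ref{lemfunda} with $t=3\gcd(n,r)$ and $|Q_{g_\mu}|=2^n-1$ reproduces exactly the first case of the theorem, i.e.\ the first branch of \eqref{i+j_1}.

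For the branch $\|n\|_2<\|r\|_2$ (which forces $n$ even) I would carry out the finer two–tier count. Since $a\mapsto a^p/\mu$ permutes $\GF{n}^*$ and, by Lemma~\ref{gcd_p+1_general}, the $(p+1)$-th powers form a subgroup of index $\gcd(p+1,2^n-1)=2^{\gcd(n,r)}+1$, exactly a fraction $\tfrac{2^{\gcd(n,r)}}{2^{\gcd(n,r)}+1}$ of the $a$'s have $a^p/\mu$ a non-$(p+1)$-th power; for these $K_{a,2}=\emptyset$ and $|K_{a,3}|\leq 2^{\gcd(n,r)}(2^{\gcd(n,r)}-1)$, whence $r_a\leq 2\gcd(n,r)$, while the remaining $\tfrac{1}{2^{\gcd(n,r)}+1}(2^n-1)$ values only give $r_a\leq 4\gcd(n,r)$. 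Splitting the sum accordingly yields
\[
\sum_{a\in\GF{n}^*}2^{\frac{n+r_a}{2}}\leq (2^n-1)2^{\frac n2}\left(\frac{2^{2\gcd(n,r)}}{2^{\gcd(n,r)}+1}+\frac{2^{\gcd(n,r)}\cdot 2^{\gcd(n,r)}}{2^{\gcd(n,r)}+1}\right)=(2^n-1)2^{\frac n2}\frac{2^{2\gcd(n,r)+1}}{2^{\gcd(n,r)}+1},
\]
and substituting this into \eqref{Mesnager} gives the second case of the theorem.

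The step I expect to be the main obstacle is precisely the counting underlying the second branch: one must establish that for the generic $a$ (those with $a^p/\mu$ not a $(p+1)$-th power) the kernel really shrinks to $r_a\leq 2\gcd(n,r)$, rather than the crude $4\gcd(n,r)$ of \eqref{i+j_1}. This requires showing that in that case each auxiliary equation $z^{p^2}+(a^p/\mu)^{p-1}z+\tfrac{\omega}{\mu^p a^p}=0$ contributes at most one admissible $z$, which rests on the root count for $z^{p^2}+bz$ (Lemma~\ref{lemrootnum}) together with the index computation for $(p+1)$-th powers, and then transporting these $z$'s back through the at-most-$2^{\gcd(n,r)}$-to-one map $x\mapsto ax^p+a^px$. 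A secondary point to confirm is that $|Q_{g_\mu}|=2^n-1$, i.e.\ that $g_\mu$ has no affine derivative in the range $\gcd(n,r)\neq1,\ n\neq3r,\ n\geq4$; this follows once $r_a<n$ is verified, after which the remaining parity/floor book-keeping and the arithmetic simplification displayed above are routine.
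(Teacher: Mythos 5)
Your argument is correct and is essentially the paper's own: the corollary is just Theorem~\ref{gcdneq1} specialized to $n=sr$ (so $\gcd(n,r)=r$, with $s$ odd giving $\|n\|_2=\|r\|_2$ and $s$ even giving $\|n\|_2<\|r\|_2$), and your two branches — the uniform bound $r_a\le 3\gcd(n,r)$ via $K_{a,1},K_{a,2},K_{a,3}$, and the refined split according to whether $a^p/\mu$ is a $(p+1)$-th power, yielding $r_a\le 2\gcd(n,r)$ for a fraction $\tfrac{2^{\gcd(n,r)}}{2^{\gcd(n,r)}+1}$ of the $a$'s — reproduce exactly the derivation the paper gives just before that theorem. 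The arithmetic leading to $\tfrac{2^{2\gcd(n,r)+1}}{2^{\gcd(n,r)}+1}$ also matches.
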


The lower bounds presented by this corollary are better than ones
given by Items 2-4 of Example \ref{manypaper} which can be
reformulated as: For $n=sr, 4\leq s\leq 6$
\[
nl_2(g_\mu)\geq
2^{n-1}-\frac{1}{2}\sqrt{2^n+(2^n-1)2^{\frac{n}{2}}2^{(\frac{s}{2}-1)r}}.
\]

On the other hand, when $s=7$, this corollary gives the same lower
bound with Corollary \ref{manypaper_new}.

\subsection{Problems for further considerations}

If $Tr(\frac{z}{a^{p+1}})=1$, then the  equation $z=ax^p+a^px$ has
no solution in $\GF{n}$.
\begin{problem}
Use this fact to improve on the lower bound of second-order
nonlinearity given in Theorem \ref{gcdneq1} for $g_{\mu}=Tr(\mu
x^{2^{2r}+2^{r}+1})$, where $\mu \in \GF{n}$, $\gcd (n,r)\neq 1$ and
$n\geq 4$.
\end{problem}
 Consider generic cubic monomial Boolean function
$f_{\mu}=Tr(\mu x^{2^{i}+2^{j}+1})$, where $\mu \in \GF{n}$ and
$n>i>j>0$. Let us introduce denotations: $p=2^j, q=2^i$. The
quadratic part of derivative $D_af_\mu$ of $f_\mu$ at $a\in \GF{n}$
is $Tr(\mu ax^{q+p}+\mu a^px^{q+1}+\mu a^{q}x^{p+1})=Tr(\mu^{1/p}
a^{1/p}x^{q/p+1}+\mu a^px^{q+1}+\mu a^{q}x^{p+1})$. With reference
to \eqref{polarform},  $\varepsilon_{f_\mu,a}$ is the solution set
of linear equation
\[
\mu^{1/p} a^{1/p}x^{q/p}+\mu a^px^{q}+\mu a^{q}x^{p}+\mu^{1/q}
a^{1/q}x^{p/q}+\mu^{1/q} a^{p/q}x^{1/q}+\mu^{1/p}a^{q/p}x^{1/p}=0,
\]
or, equivalently
\begin{equation}\label{openP}
L(x)=[(a\mu)x^p+(a^p\mu)x+(a^{q^2}\mu^q)x^{pq}]^p+[(a\mu)x^q+(a^q\mu)x+(a^{p^2}\mu^p)x^{pq}]^q=0.
\end{equation}

\begin{problem}
Determine the set of $a$'s such that the equation \eqref{openP} has
solutions of smaller number than $2^V$ in $\GF{n}$ where $V$ is
given by Theorem \ref{corrootnum} (or computed by Section
\ref{secmini}).
\end{problem}

\section{Conclusion}
When a linearized polynomial is given, to determine its root number
is an important task in finite field and symmetric cryptography
theory. This paper contributes to give a better general method to
get more precise upper bound on the root number of any given
linearized polynomial.

Then, as an application of this result, we improve the estimation
for lower bound of the second-order nonlinearities of cubic Boolean
functions. For example, for cubic monomial Boolean function
$f_\mu(x)=Tr(\mu x^{2^9+2^5+1})$, the best previous result \cite{8}
can say $nl_2(f_\mu)\geq 393216$ over $F_{2^{20}}$ and
$nl_2(f_\mu)\geq 196608$ over $F_{2^{19}}$. By this paper, now we
know $nl_2(f_\mu)\geq 431605$ over $F_{2^{20}}$ and $nl_2(f_\mu)\geq
238971$ over $F_{2^{19}}$. And, while the best previous result can
show only $nl_2(Tr(\mu x^{2^{18}+2^{10}+1}))\geq 76781$ over
$F_{2^{19}}$, this paper proves $nl_2(Tr(\mu
x^{2^{18}+2^{10}+1}))\geq 238971$.

Furthermore, this paper shows that  by studying the distribution of
radicals of derivatives of a given Boolean functions one can get a
better lower bound of the second-order nonlinearity, through an
example of the Boolean function $g_{\mu}=Tr(\mu x^{2^{2r}+2^r+1})$
over any finite field $\GF{n}$.

These results show that many cubic Boolean functions such as
$g_{\mu}=Tr(\mu x^{2^{2r}+2^r+1})$ over any finite field $\GF{n}$
have larger Hamming distance to the affine functions and quadratic
functions than it was known (thus could be expected). They can be
used in choice of cubic Boolean functions which are resistant
against linear and quadratic approximation attacks.

\end{document}